\newcommand{\ds}{\displaystyle}
\newcommand{\ts}{\textstyle}
\newcommand{\real}{\operatorname{Re}}
\definecolor{rouge}{rgb}{1,0,0}
\definecolor{bleu}{rgb}{0,0,1}
\definecolor{vert}{rgb}{0,0.5,0}
\definecolor{clair}{rgb}{0,0.5,1}
\definecolor{mygreen}{rgb}{0.15,0.7,0.15}
\definecolor{myred}{rgb}{0.9,0.05,0.05}
\definecolor{myblue}{rgb}{0.0352,0.4981,0.6509}
\title{Diffusive approximation of a time-fractional Burger's equation in nonlinear acoustics} 
\author{Bruno Lombard\thanks{LMA, CNRS, UPR 7051, Aix-Marseille Univ., Centrale Marseille, F-13453 Marseille Cedex 13, France.  (\email{lombard@lma.cnrs-mrs.fr}).}
\and Denis Matignon\thanks{ISAE-Supa\' ero, University of Toulouse, BP 54032, 31055 Toulouse Cedex 4, France. (\email{Denis.Matignon@isae.fr}).}}
\begin{document}
\maketitle
\slugger{siap}{xxxx}{xx}{x}{x--x}

\begin{abstract}
A fractional time derivative is introduced into Burger's equation to model losses of nonlinear waves. This term amounts to a time convolution product, which greatly penalizes the numerical modeling. A diffusive representation of the fractional derivative is adopted here, replacing this nonlocal operator by a continuum of memory variables that satisfy local-in-time ordinary differential equations. Then a quadrature formula yields a system of local partial differential equations, well-suited to numerical integration. The determination of the quadrature coefficients is crucial to ensure both the well-posedness of the system and the computational efficiency of the diffusive approximation. For this purpose, optimization with constraint is shown to be a very efficient strategy. Strang splitting is used to solve successively the hyperbolic part by a shock-capturing scheme, and the diffusive part exactly. Numerical experiments are proposed to assess the efficiency of the numerical modeling, and to illustrate the effect of the fractional attenuation on the wave propagation.
\end{abstract}

\begin{keywords}
fractional derivatives, diffusive representation, nonlinear acoustics, Burger's equation, Strang splitting, shock-capturing schemes
\end{keywords}

\begin{AMS}
26A33, 35L60, 74J30.
\end{AMS}

\pagestyle{myheadings}
\thispagestyle{plain}
\markboth{B. LOMBARD AND D. MATIGNON}{FRACTIONAL BURGER'S EQUATION}


\section{Introduction}\label{SecIntro}

We investigate Burger's equation with a fractional time derivative $D_t^\alpha$:
\begin{equation}
\frac{\textstyle \partial u}{\partial t}+\frac{\textstyle \partial}{\textstyle \partial x}\left(a\,u+b\,\frac{\textstyle u^2}{\textstyle 2}\right)=-\varepsilon\,D^\alpha_t u, \hspace{1cm} \varepsilon \geq 0,\quad 0<\alpha<1.
\label{TM}
\end{equation}
$D_t^\alpha$ is a convolution product in time with a singular kernel \cite{Matignon08}. The l.h.s. of (\ref{TM}) is a standard transport equation, with linear advection at constant speed $a$ and a nonlinear quadratic term with coefficient $b$. The r.h.s. of (\ref{TM}) models linear losses and memory effects along the propagation. Since $\alpha<1$, the hyperbolic nature of Burger's equation is preserved.

Fractional Burger's equation with a fractional Laplacian with respect to space in the r.h.s. - instead of a fractional time derivative - have been largely investigated by many authors. Such equations model anomalous dispersion or diffusion \cite{Sousa09}, or sedimentation of particles \cite{Chen08}. In this case, theoretical results of existence, uniqueness, regularity and asymptotic behavior of the solution can be found in \cite{Biler98,Guesmia10}. On the contrary, very few theoretical investigations of (\ref{TM}) have been proposed in the litterature, up to our knowledge. The particular case $\alpha=1/2$ has been examined in \cite{Sugimoto91}, where a matched-asymptotic analysis of the boundary layer is proposed, together with a semi-analytical resolution.

Various physical configurations are described by (\ref{TM}). Particular values of $\varepsilon$ and $\alpha$ enable to recover Chester's equation describing propagation of finite-amplitude sound waves in tubes \cite{Chester64}, up to ${\cal O}(\varepsilon^2$) terms. This equation is widely used to model brass instruments (trombones, trumpets): the transport terms describe the steepening of waves, yielding the typical "brassy" effect \cite{Hirschberg96}, and the fractional term models the viscothermal losses at the wall of the duct \cite{Bruneau89, Menguy00}. Moreover, a linear fractional wave equation known as the Lokshin model \cite{Lok78,Polack91} has been studied in e.g. \cite{Matignon-These94,Matignon94p,Helie06-M3AS}, and can be seen as the superposition of two one-way fractional transport equations of this type. Other applications of (\ref{TM}) concern viscoelasticity, propagation in elastic-walled tubes, or more generally wave propagation in media with memory and complex rheological properties \cite{Szabo94,Mainardi10,Nasholm11}. See \cite{Makarov97} for a review on the physical models involving nonlinear and thermoviscous phenomena.

The numerical resolution of (\ref{TM}) requires adequate tools for both the hyperbolic part and the fractional part. On the one hand, the computation of scalar nonlinear hyperbolic PDE such as Burger's equation is now a mature subject, with a wide number of available efficient approaches, e.g. shock-capturing schemes \cite{LeVeque02}. On the other hand, the computation of the fractional part is less standard. A naive discretization of this term requires to store the entire variable history, which could sometimes be used for fractional ODEs, but is out of reach in practical situations for fractional PDEs. Another approach is commonly used, based on the Gr\"unwald-Letnikov approximation of fractional derivatives \cite{Sousa09,Yuste05}. However, the stability analysis of this multistep method may be intricate  \cite{Lub86}: von Neumann analysis requires to bound the characteristic roots of the amplification matrix, which is a tedious task, especially when coupled with a nonlinear equation \cite{Yuste05,Yuste06}. 

Here, we follow an alternative time-domain approach based on a diffusive representation of the fractional derivative. The latter is written as a continuum of memory variables satisfying local-in-time ODE \cite{DeMi88,Staff94,Montseny98}. Discretization by a quadrature formula yields a diffusive approximation \cite{Deu10,Haddar10}, which is then coupled with the nonlinear hyperbolic equation. The stability of the system is obtained as long as the quadrature coefficients are positive. Positivity of the coefficients also ensures that the condition of numerical stability is the same as for the hyperbolic PDE, which constitutes a major advantage of this approach. 

The efficiency of the diffusive approximation relies crucially on the computation of the quadrature coefficients. The specifications concern both the positivity of the coefficients and the accuracy of the quadrature formula, in order to need only a small set of memory variables, and hence a reduced number of computational arrays. The methods based on Gaussian polynomials ensure positivity, but their convergence is very slow \cite{Yuan02,Diethelm08}, even if improvements have been recently obtained with Gauss-Jacobi polynomials \cite{Birk10}. Greater accuracy is reached when least-squares optimization is implemented \cite{Helie06-SP,Deu10,Blanc13a,Lombard14}, but some negative coefficients are usually obtained. In this paper, we use optimization with constraints of positivity, which provides a great improvement of accuracy compared with the aforementionned quadrature methods. This type of optimization has already been used with success in the context of poroelasticity \cite{TheseBlanc}, viscoelasticity \cite{Blanc16}, and recently for Chester's equation describing nonlinear acoustic waves in a guide \cite{Berjamin16}. 

Compared with previous works on nonlinear waves with fractional derivatives \cite{Lombard14,Berjamin16}, this paper introduces three novelties:
\begin{enumerate}
\item any value of $\alpha$ is considered, and not only $\alpha=1/2$;
\item contrary to Chester's equation, an energy functional is found, which ensures a solid theoretical basis;
\item in the linear case, a closed-form solution is proposed, which provides a strong validation of the numerical methods.
\end{enumerate} 
The paper is organized as follows. The model (\ref{TM}) is stated in section \ref{SecPb}. The diffusive representation of the fractional derivative is introduced. In section \ref{SecEvol}, the continuum of memory variables is discretized by a quadrature formula, yielding a local first-order system of PDEs. The positivity of the quadrature coefficients has a crucial influence on the properties of the system, such as the decrease of energy. The numerical methods are addressed in section \ref{SecNum}. The quadrature coefficients are initialized by a Gauss-Jacobi method, and then optimized under a positivity constraint. A splitting strategy is used to integrate the system of PDEs. The propagative part of the system is solved by a standard scheme for hyperbolic equations, whereas the diffusive part is solved exactly. Numerical experiments are proposed in section \ref{SecRes}. Comparisons with exact solutions in the linear case confirm the accuracy of the modeling. The effect of fractional dissipation on the emergence of shocks is also illustrated. Conclusions are drawn in section \ref{SecConclu}, and future lines of research are suggested. In appendix, the link between (\ref{TM}) and Chester's equation is shown and some properties are proven.


\section{Problem statement}\label{SecPb}

\subsection{Cauchy problem}\label{SecPbTM}

The problem at hand is
\begin{subnumcases}{\label{ToyModel}}
\ds
\frac{\textstyle \partial u}{\partial t}+\frac{\textstyle \partial}{\textstyle \partial x}\left(a\,u+b\,\frac{\textstyle u^2}{\textstyle 2}\right)+\varepsilon\,D^\alpha_t u=\delta(x)\,g(t),\quad t>0,\label{ToyModel1}\\
[10pt]
\ds
u(x,0)=u_0(x), \quad x\in \mathbb{R},\label{ToyModel2}
\end{subnumcases}
with $b \geq 0$ and $\varepsilon\geq 0$. The r.h.s. of (\ref{ToyModel1}) models a time forcing term located at $x=0$. For a causal function $h(t)$, $D^\alpha_t h$ refers to the Caputo fractional derivative in time of order $\alpha$, with $0<\alpha<1$:
\begin{equation}
D^\alpha_t h=\frac{\textstyle t^{-\alpha}}{\textstyle \Gamma(1-\alpha)}\mathop{*}\limits_{t}\frac{\textstyle dh}{\textstyle dt}=\frac{1}{\Gamma(1-\alpha)}\int_{0}^{t}(t-\tau)^{-\alpha}\frac{\textstyle dh}{\textstyle d\tau}(\tau)\,d\tau,
\label{FD}
\end{equation}
where $\Gamma$ is the Gamma Euler function and $\mathop{*}\limits_{t}$ is the convolution product in time. This definition follows from
\begin{equation}
D^\alpha_t h=I_t^{1-\alpha}\left(\frac{\textstyle dh}{\textstyle dt}\right),
\label{Caputo}
\end{equation}
where $I_t^\beta$ is the Riemann-Liouville fractional integral in time of order $\beta$, with $0<\beta<1$:
\begin{equation}
I^\beta_t h=\frac{1}{\Gamma(\beta)}\int_{0}^{t}(t-\tau)^{\beta-1}h(\tau)\,d\tau.
\label{FI}
\end{equation}


\subsection{Dispersion analysis}\label{SecPbDisp}

The goal of this section is to derive a dispersion analysis of the model (\ref{ToyModel}), that will serve as a reference case for further numerical approximations; hence functions to be transformed are supposed to be smooth enough, and the initial conditions accordingly. The forcing term is removed in this section: $g(t)=0$.

The Fourier transforms in time and space are denoted
\begin{equation}
\widehat{h}(\omega)={\cal F}_t(h)=\int_{-\infty}^{+\infty}h(t)\,e^{-i\omega t}\,dt,\hspace{0.8cm}
\widehat{h}(k)={\cal F}_x(h)=\int_{-\infty}^{+\infty}h(x)\,e^{+ikx}\,dx,
\label{Fourier}
\end{equation}
where $\omega$ is the angular frequency and $k$ is the wavenumber. The Fourier transform in time of the Caputo fractional derivative (\ref{FD}) is
\begin{equation}
\widehat{D_t^\alpha h}=(i\omega)^\alpha\hat{h}(\omega).
\label{FourierFD}
\end{equation}
Applying (\ref{Fourier}) to the fractional PDE (\ref{ToyModel}) and using (\ref{FourierFD}) provides the nonlinear equation 
\begin{equation}
i\,\omega\,\hat{u}-i\,k\left(a\,{\hat u}+\frac{\textstyle b}{\textstyle 2}\widehat{u^2}\right)+\varepsilon\,\chi(\omega)\,\hat{u}=0, 
\label{HatU}
\end{equation}
where $\chi$ is the symbol of the pseudo-differential operator (\ref{FD}): 
\begin{equation}
\chi(\omega)=\left(i\,\omega\right)^\alpha.
\label{ChiFD}
\end{equation}
When $a\neq 0$ and $b=0$, one obtains the dispersion relation
\begin{equation}
k=\frac{\textstyle \omega}{\textstyle a}-i\,\frac{\textstyle \varepsilon}{\textstyle a}\,\chi(\omega).
\label{Dispersion}
\end{equation}
It follows the phase velocity $\upsilon_\varphi=\omega\,/\,\Re\mbox{e}(k)$ and the attenuation $\eta=-\Im\mbox{m}(k)$:
\begin{equation}
\upsilon_\varphi(\omega)=\frac{\textstyle a}{\textstyle 1+\displaystyle \varepsilon \sin\left(\frac{\textstyle \alpha\,\pi}{\textstyle 2}\right)\,\omega^{\alpha-1}},\hspace{1cm}
\eta=\frac{\textstyle \varepsilon}{\textstyle a}\,\cos\left(\frac{\textstyle \alpha\,\pi}{\textstyle 2}\right)\,\omega^\alpha.
\label{VitAtt}
\end{equation}
One deduces the elementary properties: if $\varepsilon \neq 0$, then
\begin{equation}
\begin{array}{l}
\displaystyle
\upsilon_\varphi(0)=0,\quad \lim_{\omega\rightarrow +\infty}\upsilon_\varphi(\omega)=a,\quad \upsilon_\varphi^{'}>0,\\
[8pt]
\displaystyle
\eta(0)=0,\quad \lim_{\omega\rightarrow+\infty}\eta(\omega)=+\infty,\quad \eta^{'}>0.
\end{array}
\label{PropertyDisp}
\end{equation}
Figure \ref{FigDispExact} illustrates (\ref{VitAtt}) and the properties (\ref{PropertyDisp}), for various values of $\alpha$: 1/3, 1/2 and 0.7. These values are chosen because closed-form solutions of fractional advection are known when $\alpha=1/3$ and $\alpha=1/2$: see section \ref{SecExactParti}. The attenuation increases with $\alpha$, contrary to the phase velocity.

\begin{figure}[htbp]
\begin{center}
\begin{tabular}{cc}
phase velocity $\upsilon_\varphi$ & attenuation $\eta$ \\
\hspace{-0.8cm}
\includegraphics[scale=0.33]{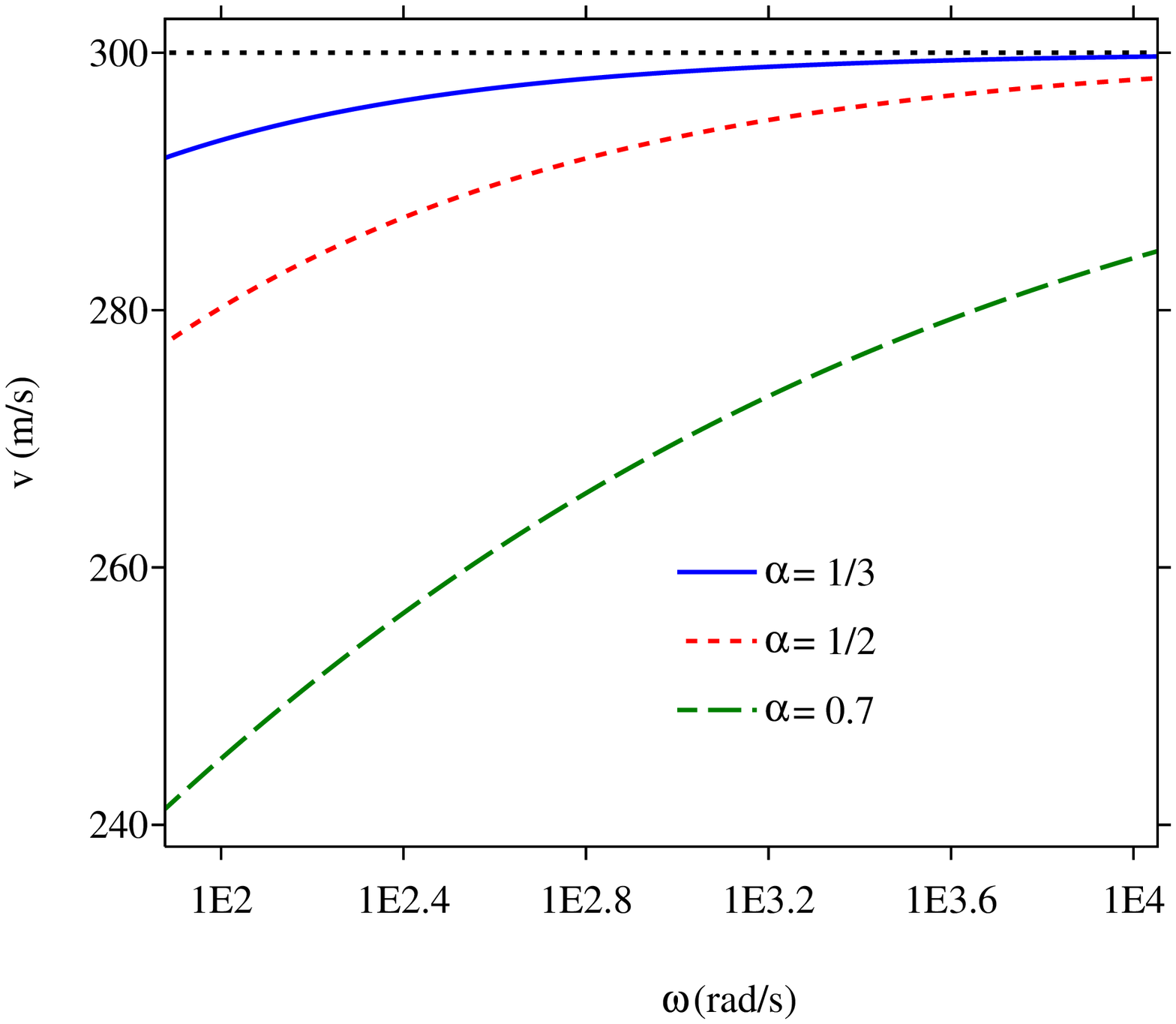}&
\hspace{-0.8cm}
\includegraphics[scale=0.33]{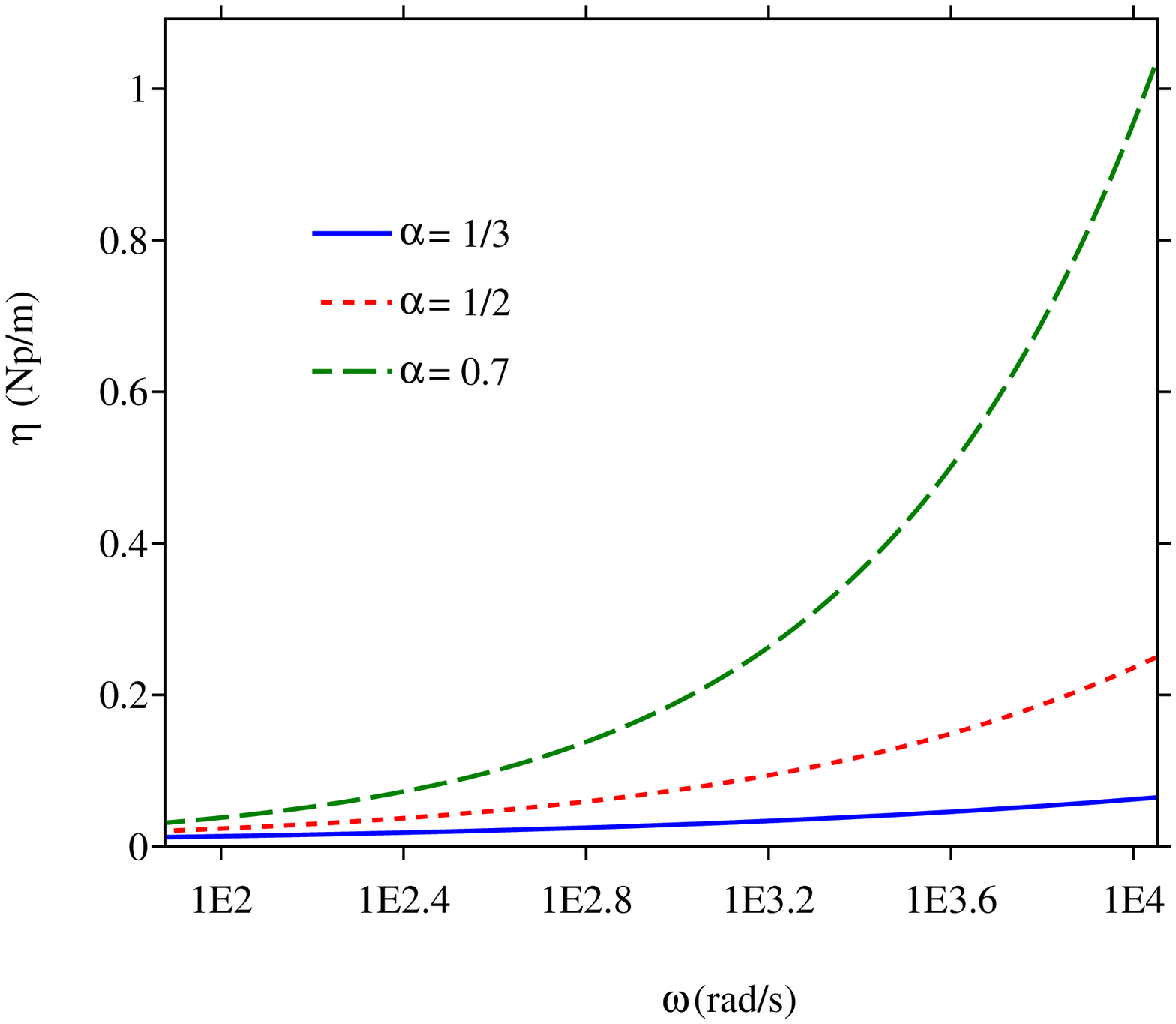}
\end{tabular}
\vspace{-0.5cm}
\caption{\label{FigDispExact} Dispersion curves deduced from (\ref{Dispersion}) in the linear regime ($b=0$), with $a=300$ m/s, $\varepsilon=1$ s$^{\alpha-1}$ and various values of $\alpha$: 1/3, 1/2 and 0.7. The horizontal dotted line in the phase velocity denotes the sound speed $a$.} 
\end{center}
\end{figure}


\subsection{Diffusive representation}\label{SecPbDR}

The convolution product in (\ref{FD}) complicates the numerical resolution of (\ref{ToyModel}). The past values of the solution must be stored, which is too expensive computationally. The alternative approach adopted in this study is based on a diffusive representation of fractional derivatives originally introduced in \cite{DeMi88,Staff94,Montseny98,Matignon08}. Here we follow the approach introduced in \cite{Yuan02,Diethelm08}, which proves to be equivalent to the diffusive representation formalism, up to the change of variables $\xi=\theta^2$: namely, for $0<\alpha<1$, the fractional derivative (\ref{FD}) can be recast as
\begin{equation}
D_t^\alpha u=\int_{0}^{+\infty}\!\!{\phi(x,t,\theta)\,d\theta},
\label{DR}
\end{equation}
where the function $\phi$ is defined owing to a change of variables as
\begin{equation}
\phi(x,t,\theta)=\frac{2\sin(\pi\alpha)}{\pi}\,\theta^{2\alpha-1}\int_{0}^t\frac{\textstyle \partial u}{\textstyle \partial \tau}(x,\tau)\,e^{-(t-\tau)\,\theta^2}\,d\tau.
\label{VarDiff}
\end{equation}
A short proof of (\ref{DR}) is given in appendix \ref{SecProofDR}. As $\phi$ is expressed in terms of an integral operator with decaying exponential kernel, it is referred to as a diffusive variable (or memory variable). From equation (\ref{VarDiff}), it satisfies the following first-order differential equation for $\theta>0$:
\begin{equation}
\left\{
\begin{array}{l}
\displaystyle
\frac{\textstyle \partial \phi}{\textstyle \partial t}=-\theta^2\,\phi+\gamma_\alpha\,\theta^{2\alpha-1}\frac{\textstyle \partial u}{\partial t},\\
[8pt]
\phi(x,0,\theta)=0,
\end{array}
\right.
\label{ODE-DR}
\end{equation}
with
\begin{equation}
\gamma_\alpha=\frac{2\sin(\pi\alpha)}{\pi} > 0.
\label{Gamma}
\end{equation}
Note for further use that $\gamma_{1-\alpha}=\gamma_{\alpha}$. The diffusive representation (\ref{DR})--(\ref{VarDiff}) amounts to replace the non-local term in (\ref{ToyModel}) by an integral over $\theta$ of the function $\phi(x,t,\theta)$ which obeys the local first-order ordinary differential equation (\ref{ODE-DR}).

For further analysis, one can also define another diffusive representation of the fractional derivative of order $\alpha$ (\ref{FD}), making use of the fractional integral (\ref{FI}) of order  $\beta=1-\alpha$: let $\psi$ be the new diffusive variable satisfying the ODE
\begin{equation}
\left\{
\begin{array}{l}
\ds
\frac{\textstyle \partial \psi}{\textstyle \partial t}=-\theta^2\,\psi+\gamma_\beta\,\theta^{1-2\beta}\,u,\\
[8pt]
\psi(x,0,\theta)=\Psi(x,\theta).
\end{array}
\right.
\label{ODE_Xi}
\end{equation}

\begin{proposition}
The following first identity holds:
\begin{equation}
\text{for} \quad \Psi(x,\theta):=0, \quad 
I_t^\beta u=\int_{0}^{+\infty}\psi(x,t,\theta)\,d\theta.
\label{DR_I}
\end{equation}
With a particular choice of initial data, the following second identity holds:
\begin{equation}
\text{for} \quad \Psi(x,\theta):= \gamma_\beta\frac{u_0(x)}{\theta^{1+2\beta}}, \quad 
D_t^\alpha u= \ds \int_{0}^{+\infty} \left(-\theta^2\,\psi(x,t,\theta)+\gamma_\beta\,\theta^{1-2\beta}\,u(x,t)\right)\,d\theta.
\label{DiffusifEtendu}
\end{equation}
\label{PropRD}
\end{proposition}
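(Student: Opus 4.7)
The ODE (\ref{ODE_Xi}) is linear in $\psi$ with constant (in $t$) coefficient $-\theta^2$, so I first write its solution by Duhamel's formula:
\begin{equation*}
\psi(x,t,\theta)=e^{-\theta^2 t}\,\Psi(x,\theta)+\gamma_\beta\,\theta^{1-2\beta}\int_0^t e^{-(t-\tau)\theta^2}\,u(x,\tau)\,d\tau.
\end{equation*}
Every subsequent step consists in substituting this expression, exchanging the order of integration (Fubini, legitimate under the assumed smoothness), and recognising standard fractional objects.

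For the first identity, with $\Psi\equiv 0$, Fubini yields
$\int_0^{+\infty}\!\psi\,d\theta=\gamma_\beta\!\int_0^t u(x,\tau)\bigl[\int_0^{+\infty}\!\theta^{1-2\beta}e^{-(t-\tau)\theta^2}d\theta\bigr]d\tau$. The inner integral is computed by the change of variable $s=(t-\tau)\theta^2$, producing $\frac{1}{2}(t-\tau)^{\beta-1}\Gamma(1-\beta)$. Multiplying by the prefactor $\gamma_\beta=\frac{2\sin(\pi\beta)}{\pi}$ and invoking Euler's reflection formula $\Gamma(\beta)\Gamma(1-\beta)=\pi/\sin(\pi\beta)$ collapses the constant to $1/\Gamma(\beta)$, giving exactly $I_t^\beta u$ by (\ref{FI}).

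For the second identity, the cleanest route is not to integrate directly in $\theta$ but to reduce the integrand to the diffusive variable $\phi$ of (\ref{VarDiff}) and then use (\ref{DR}). I integrate by parts in $\tau$ inside $\int_0^t e^{-(t-\tau)\theta^2}u(x,\tau)d\tau$, using $\partial_\tau e^{-(t-\tau)\theta^2}=\theta^2 e^{-(t-\tau)\theta^2}$. This produces three terms, one of which is $-\gamma_\beta\,\theta^{-1-2\beta}\,u_0(x)\,e^{-t\theta^2}$; the choice $\Psi(x,\theta)=\gamma_\beta u_0(x)/\theta^{1+2\beta}$ is designed precisely so that the contribution of $\Psi$ in $\psi$ cancels this boundary term. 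What remains in $\psi$ is
$$\psi(x,t,\theta)=\gamma_\beta\,\theta^{-1-2\beta}\,u(x,t)-\gamma_\beta\,\theta^{-1-2\beta}\int_0^t e^{-(t-\tau)\theta^2}\,\frac{\partial u}{\partial\tau}(x,\tau)\,d\tau,$$
so that $-\theta^2\psi+\gamma_\beta\theta^{1-2\beta}u = \gamma_\beta\,\theta^{1-2\beta}\int_0^t e^{-(t-\tau)\theta^2}\partial_\tau u\,d\tau$. Using $\beta=1-\alpha$, hence $1-2\beta=2\alpha-1$ and $\gamma_\beta=\gamma_{1-\alpha}=\gamma_\alpha$ (noted right after (\ref{Gamma})), this integrand is exactly $\phi(x,t,\theta)$ of (\ref{VarDiff}). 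Integrating over $\theta$ and applying (\ref{DR}) yields $D_t^\alpha u$, as required.

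The only real obstacle is guessing the right $\Psi$: it is motivated by the need to absorb the boundary contribution coming from integration by parts and to make the two diffusive representations consistent. Once that substitution is performed, both identities follow from routine manipulations (Fubini, the Gaussian-type integral, Euler's reflection formula) and from recognising the already-established identity (\ref{DR}).
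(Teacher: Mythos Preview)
Your argument is correct. For the first identity your direct computation of the Gaussian-type integral is essentially the same as the paper's, which instead cross-references the analogous calculation already done for~(\ref{DR}) (replacing $\partial_t u$ by $u$ and $\alpha$ by $1-\alpha$).

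For the second identity, however, you take a genuinely different route. The paper proceeds by writing $I_t^{1-\alpha}u=\int_0^{+\infty}(\psi-\Psi e^{-\theta^2 t})\,d\theta$, differentiating in $t$, and invoking the relation between the Riemann--Liouville and Caputo derivatives,
\[
D_t^\alpha u=\frac{d}{dt}\bigl(I_t^{1-\alpha}u\bigr)-\frac{t^{-\alpha}}{\Gamma(1-\alpha)}\,u_0(x).
\]
The specific choice of $\Psi$ then appears as the one making $\int_0^{+\infty}\Psi\,\theta^2 e^{-\theta^2 t}\,d\theta$ equal the correction term $t^{-\alpha}u_0/\Gamma(1-\alpha)$. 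Your approach bypasses this entirely: integrating by parts in $\tau$ inside the Duhamel integral converts the integrand $-\theta^2\psi+\gamma_\beta\theta^{1-2\beta}u$ directly into the diffusive variable $\phi$ of~(\ref{VarDiff}), after which~(\ref{DR}) finishes the job. Your method is more elementary and makes the link between the two diffusive representations $(\phi,\psi)$ transparent at the level of the integrand; the paper's route, on the other hand, exposes clearly \emph{why} the initial datum $\Psi$ takes that particular form---it is exactly the term needed to bridge the Riemann--Liouville and Caputo definitions.
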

This latter representation  (\ref{ODE_Xi})-(\ref{DiffusifEtendu}) is an extended diffusive representation. The proof of proposition \ref{PropRD} is given in appendix \ref{SecProofRD}. \newpage


\section{Evolution equations}\label{SecEvol}

\subsection{Diffusive approximation}\label{SecEvolDA}

The integral in (\ref{DR}) is approximated by a quadrature formula on $L$ points, where the diffusive variables $\phi_j$ satisfy ODE deduced from (\ref{ODE-DR}):
\begin{equation}
\left\{
\begin{array}{l}
\displaystyle
D^{\alpha}_t u(x,t)\approx \sum_{\ell=1}^{L} \mu_\ell\,\phi(x,t,\theta_\ell)\equiv \sum_{\ell=1}^{L} \mu_\ell\,\phi_{\ell}(x,t),\\
[12pt]
\displaystyle
\frac{\textstyle \partial\phi_\ell}{\textstyle \partial t}=-\theta_\ell^2\,\phi_\ell+\gamma_\alpha\,\theta_\ell^{2\alpha-1}\,\frac{\textstyle \partial u}{\partial t},\qquad \ell=1,\cdots, L,\\
[10pt]
\displaystyle
\phi_\ell(x,0)=0.
\end{array}
\right.
\label{DA}
\end{equation}
Adequate choice of the weights $\mu_\ell$ and nodes $\theta_\ell$ is a crucial issue for the efficiency and accuracy of the diffusive approximation (\ref{DA}). It is discussed later in section \ref{SecNumQuad}.

Injecting the diffusive approximation (\ref{DA}) into (\ref{ToyModel}) yields the first-order system ($j=1,\cdots, L$)
\begin{subnumcases}{\label{EDP}}
\ds
\frac{\textstyle \partial u}{\partial t}+\frac{\textstyle \partial}{\textstyle \partial x}\left(a\,u+b\,\frac{\textstyle u^2}{\textstyle 2}\right)=-\varepsilon\,\sum_{\ell=1}^L\mu_\ell\,\phi_\ell+\delta(x)\,g(t),\label{EDP1}\\
[6pt]
\ds
\frac{\textstyle \partial \phi_j}{\partial t}+\gamma_{\alpha}\,\theta_j^{2\alpha-1}\,\frac{\textstyle \partial}{\textstyle \partial x}\left(a\,u+b\,\frac{\textstyle u^2}{\textstyle 2}\right)=-\theta_j^2\,\phi_j-\gamma_{\alpha}\,\theta_j^{2\alpha-1}\varepsilon\,\sum_{\ell=1}^L\mu_\ell\,\phi_\ell\label{EDP2}\\
[6pt]
\ds \hspace{5.1cm}\nonumber -\gamma_{\alpha}\,\theta_j^{2\alpha-1}\delta(x)\,g(t),\\
[6pt]
\ds
u(x,0)=u_0(x),\quad \phi_j(x,0)=0.\label{EDP3}
\end{subnumcases}
Taking the vectors of $(L+1)$ unknowns, forcing and initial data
\begin{equation}
\begin{array}{l}
\ds 
{\bf U}(x,t)=\left(u,\,\phi_1,\,\hdots,\,\phi_L\right)^T, \\
[8pt]
\ds
{\bf G}(t)=\left(g(t),\,-\gamma_\alpha\,\theta_1^{2\alpha-1}\,g(t),\hdots,\,-\gamma_\alpha\,\theta_L^{2\alpha-1}\,g(t)\right)^T,\\
[8pt]
\ds
{\bf U}_0(x)=\left(u_0(x),\,0,\,\hdots,\,0\right)^T,
\end{array}
\label{VecU}
\end{equation}
the system (\ref{EDP}) can be put in the form
\begin{equation}
\left\{
\begin{array}{l}
\ds
\frac{\textstyle \partial}{\textstyle \partial t}{\bf U}+\frac{\textstyle \partial}{\textstyle \partial x}{\bf F}({\bf U})={\bf S\,U}+\delta(x)\,{\bf G}(t),\\
[12pt]
\ds
{\bf U}(x,0)={\bf U}_0(x),
\end{array}
\right.
\label{SystHyper}
\end{equation}
where ${\bf F}=(F^{(1)},\,F^{(2)}, \hdots, F^{(L+1)})^T$ is the nonlinear flux function 
\begin{equation}
F^{(1)}=a\,u+b\,\frac{\textstyle u^2}{\textstyle 2}, \qquad
F^{(j)}=\gamma_{\alpha}\,\theta_{j-1}^{2\alpha-1}\,F^{(1)},\qquad j=2,\,\cdots,\, L+1.
\label{Fnonlin}
\end{equation}
${\bf S}$ is the $(L+1)\times(L+1)$ relaxation matrix
\begin{equation}
{\bf S}=-
\left(
\begin{array}{cccc}
0 & \varepsilon\,\mu_1 & \cdots & \varepsilon\,\mu_L\\
[8pt]
0 & \theta_1^2+\varepsilon\,\gamma_\alpha\,\theta_1^{2\alpha-1}\,\mu_1 & \cdots & \varepsilon\,\gamma_\alpha\,\theta_1^{2\alpha-1}\,\mu_L\\
[8pt]
\vdots & \vdots & \ddots & \vdots \\
[8pt]
0 & \varepsilon\,\gamma_\alpha\,\theta_L^{2\alpha-1}\,\mu_1 & \hdots & \theta_L^2+\varepsilon\,\gamma_\alpha\,\theta_L^{2\alpha-1}\,\mu_L
\end{array}
\right),
\label{MatS}
\end{equation}
containing the coefficients of the diffusive approximation (\ref{DA}). The size of ${\bf U}$ - and hence the number of computational arrays - increases linearly with the number of diffusive variables, which renders crucial the choice of a small value of $L$.


\subsection{Properties}\label{SecEvolProp}

Some elementary properties are stated about the evolution equations (\ref{EDP}) and the system (\ref{SystHyper}). First, applying Fourier transforms in time and space to (\ref{EDP}) provides the same dispersion relation than in (\ref{HatU}) or (\ref{Dispersion}). The only modification concerns $\chi$: instead of (\ref{ChiFD}), the symbol of the diffusive operator is
\begin{equation}
\tilde{\chi}(\omega)=\gamma_\alpha\,i\omega\sum_{\ell=1}^L\mu_{\ell}\frac{\textstyle \theta_{\ell}^{2\alpha-1}}{\textstyle \theta_{\ell}^2+i\,\omega}.
\label{ChiDA}
\end{equation}
Setting
\begin{equation}
K_{\alpha,L}=\gamma_\alpha\sum_{\ell=1}^L\mu_{\ell}\,\theta_{\ell}^{2\alpha-1},
\label{KalphaL}
\end{equation}
one has
\begin{equation}
\tilde{\chi}(\omega)\mathop{\sim}\limits_{0}K_{\alpha,L}\,i\omega,\hspace{0.8cm}\tilde{\chi}(\omega)\mathop{\sim}\limits_{+\infty}K_{\alpha,L}.
\label{BF-HF}
\end{equation}
These limit cases differ from the low-frequency and high-frequency behaviors of the exact symbol (\ref{ChiFD}).

Second, the hyperbolicity of the homogeneous system obtained with ${\bf S}={\bf 0}$ in (\ref{SystHyper}) is analysed.
\begin{proposition}
The system (\ref{SystHyper}) is hyperbolic but not strictly hyperbolic.
\label{PropHyp}
\end{proposition}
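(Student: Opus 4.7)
The plan is to read hyperbolicity off the Jacobian of the flux $\mathbf{F}(\mathbf{U})$. Since every component $F^{(j)}$ depends on $\mathbf{U}$ only through the first unknown $u$, the Jacobian matrix $\mathbf{J} = \partial \mathbf{F}/\partial \mathbf{U}$ has a very special structure: its only nonzero column is the first one, and that column equals $(a+bu)\,\mathbf{c}$, where
\begin{equation*}
\mathbf{c} = \bigl(1,\; \gamma_\alpha\,\theta_1^{2\alpha-1},\; \ldots,\; \gamma_\alpha\,\theta_L^{2\alpha-1}\bigr)^T.
\end{equation*}
In other words $\mathbf{J} = (a+bu)\,\mathbf{c}\,\mathbf{e}_1^T$, a rank-one matrix (as soon as $a+bu\neq 0$).

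From this representation I can read off the spectrum directly. Since $\mathbf{e}_1^T\mathbf{c}=1$, the unique nonzero eigenvalue is $\lambda_1 = a+bu$, with eigenvector $\mathbf{c}$. The remaining $L$ eigenvalues are $0$, and the corresponding eigenspace is exactly $\ker \mathbf{J} = \{\mathbf{U} : u=0\}$, which has dimension $L$. Hence $\mathbf{J}$ admits a basis of real eigenvectors (the vectors $\mathbf{c}$ and any basis of $\{u=0\}$), so $\mathbf{J}$ is diagonalizable over $\mathbb{R}$ for every state $\mathbf{U}$. This is precisely the definition of hyperbolicity.

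Strict hyperbolicity, on the other hand, would require all $L+1$ eigenvalues to be distinct. Because $0$ has algebraic multiplicity $L\geq 2$, the eigenvalues are never all simple, so the system fails to be strictly hyperbolic. (Even at $L=1$ the property fails whenever $a+bu=0$, i.e.\ along the sonic state $u=-a/b$ when $b\neq 0$.)

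I do not anticipate a serious obstacle here: the whole argument hinges on observing that $\mathbf{F}$ depends only on $u$, which forces $\mathbf{J}$ to be rank one. The only small subtlety worth addressing is that rank-one matrices are always diagonalizable when the nonzero eigenvalue $\lambda_1$ is nonzero, and that even in the degenerate case $a+bu=0$ the Jacobian reduces to the zero matrix and thus remains trivially diagonalizable—so hyperbolicity holds globally in state space.
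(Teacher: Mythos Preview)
Your argument is correct and is essentially the same as the paper's: both compute the Jacobian $\mathbf{J}=\partial\mathbf{F}/\partial\mathbf{U}$, identify the eigenvalues $a+bu$ and $0$ (with multiplicity $L$), and exhibit a full set of eigenvectors---you via the rank-one representation $\mathbf{J}=(a+bu)\,\mathbf{c}\,\mathbf{e}_1^T$, the paper by writing out the explicit eigenvector matrix $\mathbf{R}=[\mathbf{c}\,|\,\mathbf{e}_2\,|\cdots|\,\mathbf{e}_{L+1}]$ and its inverse. Your treatment of the degenerate case $a+bu=0$ (where $\mathbf{J}=\mathbf{0}$) matches the paper's, and your remark on $L=1$ is a welcome clarification that the paper leaves implicit.
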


\begin{proof}
The eigenvalues $\zeta_\ell$ of the Jacobian matrix ${\bf J}=\frac{\partial {\bf F}}{\partial {\bf U}}$ are real:
\begin{equation}
\zeta_1=a+b\,u,\hspace{1cm} \zeta_\ell=0 \quad (\ell=2,\,\cdots,\, L+1).
\label{Eigenvalues}
\end{equation}
If $a+b\,u\neq 0$, then the matrice of eigenvectors ${\bf R}=({\bf r}_1|{\bf r}_2|\hdots|{\bf r}_{L+1})$ and its inverse ${\bf R}^{-1}$ are
\begin{equation}
{\bf R}=
\left(
\begin{array}{cccc}
1 & 0 & \hdots & 0\\
\gamma_\alpha\,\theta_1^{2\alpha-1} & 1 & & \\
\vdots & & \ddots & \\
\gamma_\alpha\,\theta_L^{2\alpha-1} &   & & 1
\end{array}
\right),\hspace{0.5cm}
{\bf R}^{-1}=
\left(
\begin{array}{cccc}
1 & 0 & \hdots & 0\\
-\gamma_\alpha\,\theta_1^{2\alpha-1} & 1 & & \\
\vdots & & \ddots & \\
-\gamma_\alpha\,\theta_L^{2\alpha-1} &   & & 1
\end{array}
\right).
\label{MatR}
\end{equation}
If $a+b\,u\neq 0$, then ${\bf R}={\bf R}^{-1}={\bf I}_{L+1}$,
where ${\bf I}$ is the identity matrix. 
\end{proof}\\

From (\ref{Eigenvalues}) and (\ref{MatR}), it follows that the characteristic fields satisfy:
\begin{equation}
\nabla \zeta_1=b,\hspace{1cm}
\nabla \zeta_\ell=0,\hspace{0.5cm}
\ell=1,\cdots,\,L.
\label{Characteristic}
\end{equation}
Consequently, there exists 1 genuinely nonlinear wave if $b \neq 0$ (shock wave or rarefaction wave), and $L$ linearly degenerate waves (contact discontinuities).\\

Third, we examine the energy of the system (\ref{EDP}) without forcing: $g(t)=0$. For this purpose, a quadrature formula of the extended diffusive representation (\ref{DiffusifEtendu}) is introduced, with $\psi_\ell(x,t)=\psi(x,t,\theta_\ell)$:
\begin{equation}
\left\{
\begin{array}{l}
\ds
D^{\alpha}_t u(x,t)\approx \sum_{\ell=1}^L\mu_\ell\frac{\partial \psi_\ell}{\partial t}=\sum_{\ell=1}^L \mu_\ell\left(-\theta_\ell^2\,\psi_\ell+\gamma_\beta\,\theta_\ell^{1-2\beta}\,u\right),\\
[12pt]
\ds
\frac{\textstyle \partial \psi_\ell}{\textstyle \partial t}=-\theta_\ell^2\,\psi_\ell+\gamma_\beta\,\theta_\ell^{1-2\beta}\,u,\qquad \ell=1,\cdots,L,\\
[12pt]
\ds
\psi_\ell(x,0)=\gamma_\beta\frac{u_0(x)}{\theta_\ell^{1+2\beta}}.
\end{array}
\right.
\label{ODE_extended}
\end{equation}

\begin{proposition}[Decrease of energy]
Let $u$ be a $C^1$ in space and time solution of (\ref{EDP}), and
\begin{equation}
\begin{array}{l}
\displaystyle
{\cal E}={\cal E}_1+{\cal E}_2,\\
[6pt]
\displaystyle
{\cal E}_1=\frac{\textstyle 1}{\textstyle 2}\int_\mathbb{R}u^2\,dx,\\
[8pt]
\displaystyle
{\cal E}_2=\frac{\textstyle 1}{\textstyle 2}\sum_{\ell=1}^L\int_\mathbb{R}\frac{\textstyle \varepsilon}{\textstyle \gamma_\alpha}\mu_\ell\,\theta_\ell^{3-2\alpha}\psi_\ell^2\,dx.
\end{array}
\label{NrjDA1}
\end{equation}
where the $\psi_\ell$ satisfy (\ref{ODE_extended}). Without forcing, one has
\begin{equation}
\frac{\textstyle d{\cal E}}{\textstyle dt}=-\sum_{\ell=1}^L\int_\mathbb{R}\frac{\textstyle \varepsilon}{\textstyle \gamma_\alpha}\,\mu_\ell\,\theta_\ell^{1-2\alpha}\left(\frac{\textstyle \partial \psi_\ell}{\textstyle \partial t}\right)^2\,dx.
\label{NrjDA2}
\end{equation}
\label{PropNrjDA}
\end{proposition}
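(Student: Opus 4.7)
The plan is to proceed by a standard energy estimate after first exploiting a key algebraic identity between the two diffusive variables.

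The crucial preliminary step is to show that $\partial_t \psi_\ell = \phi_\ell$ for all $t \geq 0$. Indeed, differentiating the ODE for $\psi_\ell$ in (\ref{ODE_extended}) in time gives
\begin{equation*}
\partial_t(\partial_t\psi_\ell) + \theta_\ell^2\,\partial_t \psi_\ell = \gamma_\alpha\,\theta_\ell^{2\alpha-1}\,\partial_t u,
\end{equation*}
which is exactly the ODE satisfied by $\phi_\ell$ (using $\gamma_\beta = \gamma_\alpha$ and $1-2\beta=2\alpha-1$). Evaluating at $t=0$, the specific choice $\Psi(x,\theta)=\gamma_\beta u_0(x)/\theta^{1+2\beta}$ is precisely what makes $\partial_t\psi_\ell|_{t=0}=0=\phi_\ell(x,0)$. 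By uniqueness for the linear ODE, $\partial_t \psi_\ell = \phi_\ell$ on $[0,+\infty)$. Consequently, (\ref{EDP1}) without forcing may be rewritten as
\begin{equation*}
\partial_t u + \partial_x\bigl(a u + b\,u^2/2\bigr) = -\varepsilon \sum_{\ell=1}^L \mu_\ell\,\partial_t \psi_\ell .
\end{equation*}

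Next, I would compute $d{\cal E}_1/dt$ by multiplying this equation by $u$ and integrating over $\mathbb{R}$. The flux term $u\,\partial_x(a u + b u^2/2) = \partial_x(a u^2/2 + b u^3/3)$ is a pure divergence and integrates to zero under the $C^1$ decay assumption, leaving
\begin{equation*}
\frac{d{\cal E}_1}{dt} = -\varepsilon \sum_{\ell=1}^L \mu_\ell \int_\mathbb{R} u\,\partial_t \psi_\ell\,dx .
\end{equation*}
Then I would solve the ODE (\ref{ODE_extended}) for $u$, namely $u = (\partial_t \psi_\ell + \theta_\ell^2 \psi_\ell)/(\gamma_\alpha\,\theta_\ell^{2\alpha-1})$, and substitute inside the integral. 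This yields two contributions:
\begin{equation*}
\int_\mathbb{R} u\,\partial_t \psi_\ell\,dx = \frac{\theta_\ell^{1-2\alpha}}{\gamma_\alpha}\int_\mathbb{R} (\partial_t\psi_\ell)^2\,dx + \frac{\theta_\ell^{3-2\alpha}}{\gamma_\alpha}\cdot\frac{1}{2}\frac{d}{dt}\int_\mathbb{R} \psi_\ell^2\,dx .
\end{equation*}

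Finally, I would recognize that summing the second contribution over $\ell$ with its prefactor $\varepsilon\mu_\ell$ reproduces exactly $d{\cal E}_2/dt$, so that this term cancels once moved to the left-hand side. What remains is precisely (\ref{NrjDA2}). The whole argument is structurally a standard energy identity; the one point requiring care is the identification $\phi_\ell = \partial_t \psi_\ell$, which depends critically on the particular initial data $\Psi$ prescribed in Proposition \ref{PropRD} and justifies why the extended diffusive representation, rather than the $\phi_\ell$ directly, is the right object for the Lyapunov functional.
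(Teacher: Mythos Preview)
Your proof is correct and follows essentially the same route as the paper's: multiply the evolution equation by $u$, integrate, drop the flux as an exact divergence, then use the $\psi_\ell$-ODE to express $u$ in terms of $\partial_t\psi_\ell$ and $\psi_\ell$, splitting the remaining integral into the dissipation term and $d{\cal E}_2/dt$. Your explicit verification that $\partial_t\psi_\ell=\phi_\ell$ (via the matched ODE and the tailored initial datum $\Psi$) is a useful clarification that the paper leaves implicit when it passes from (\ref{EDP}) to the system in $\psi_\ell$.
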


\begin{proof}
One introduces the flux function $f$ and the Hamiltonian $H$
\begin{equation}
f(u)=au+b\frac{u^2}{2},\hspace{0.5cm}
H(u)=a\frac{u^2}{2}+b\frac{u^3}{3}.
\label{NrjHamilton}
\end{equation}
Equation (\ref{ToyModel}) and the extended diffusive approximation (\ref{ODE_extended}) yield the system
\begin{subnumcases}{\label{ProofNrjA}}
\ds \frac{\partial u}{\partial t}+\frac{\partial}{\partial x}f(u)+\varepsilon z=0,\label{ProofNrjA1}\\
\ds
z=\sum_{\ell=1}^L\mu_\ell\frac{\partial \psi_\ell}{\partial t},\label{ProofNrjA2}\\
\ds
\frac{\partial \psi_\ell}{\partial t}=-\theta_\ell^2\,\psi_\ell+\gamma_\beta\,\theta_\ell^{1-2\beta}\,u,\qquad \ell=1,\cdots,L,.\label{ProoNrjA3}
\end{subnumcases}
Taking the product of (\ref{ProofNrjA1}) with $u$ and integrating over $\mathbb{R}$ gives
\begin{equation}
\underbrace{\int_{\mathbb{R}}u\frac{\partial u}{\partial t}\,dx}_A+\underbrace{\int_{\mathbb{R}}u\,\frac{\partial}{\partial x}f(u)\,dx}_B+\underbrace{\int_{\mathbb{R}}\varepsilon\,u\,z\,dx}_C=0.
\label{ProofNrjB}
\end{equation}
The term $A$ in (\ref{ProofNrjB}) recovers the kinetic energy ${\cal E}_1$ in (\ref{NrjDA1}). For smooth solutions with compact support, the second term $B$ vanishes:
\begin{equation}
B=\left[H(u)\right]_{-\infty}^{+\infty}=0.
\label{ProofNrjC}
\end{equation}
Lastly, $u$ and $z$ in (\ref{ProofNrjB}) are expressed in terms of the extended diffusive variables (\ref{ProofNrjA2})-(\ref{ProoNrjA3}), and $\beta$ is replaced by $1-\alpha$. Since $\gamma_{1-\alpha}=\gamma_\alpha$, one obtains
\begin{equation}
\begin{array}{lll}
C &=& \ds \int_{\mathbb{R}}\frac{\varepsilon}{\gamma_\alpha}\theta_\ell^{1-2\alpha}\,\left(\frac{\textstyle \partial \psi_\ell}{\textstyle \partial t}+\theta_\ell^2\psi_\ell\right)\sum_{\ell=1}^L\mu_\ell\frac{\partial \psi_\ell}{\partial t}\,dx,\\
[12pt]
&=& \ds \sum_{\ell=1}^L \int_{\mathbb{R}}\frac{\varepsilon}{\gamma_\alpha}\left(\mu_\ell\,\theta_\ell^{1-2\alpha}\left(\frac{\partial \psi_\ell}{\partial t}\right)^2+ \theta_\ell^{3-2\alpha}\psi_\ell \frac{\partial \psi_\ell}{\partial t}\right)\,dx.
\end{array}
\end{equation}
It follows the term ${\cal E}_2$ in (\ref{NrjDA1}) and the decrease rate in (\ref{NrjDA2}), which concludes the proof.
\end{proof}

Two remarks are raised by proposition \ref{PropNrjDA}:
\begin{itemize}
\item the existence of a decreasing energy is conditional. Positivity of weights and nodes $\mu_\ell$ and $\theta_\ell$ is indeed required to ensure that ${\cal E}_2$ is a definite positive quadratic form (\ref{NrjDA1}) and to obtain $\frac{d{\cal E}}{dt}\leq 0$ (\ref{NrjDA2}). This positivity requirement is crucial for the well-posedness of (\ref{EDP}) and is examined in detail in section \ref{SecNumQuad}; 
\item $C^1$ smoothness of the solution both in space and time was assumed. In the case where a shock occurs, (for instance if $\varepsilon=0$), then the term $B$ in (\ref{ProofNrjC}) no more vanishes. It is replaced by $[u]^3/12<0$, where $[u]<0$ refers to the jump $u(x_s^+,t)-u(x_s^-,t)$, and $x_s(t)$ is the location of the shock. The decrease of energy is then the sum of two terms: a term proportional to $\varepsilon$ (due to intrinsic attenuation), and a term due to the occurence of shocks. To the best of our knowledge, there is still no theoretical results to predict the existence of shocks in the case $\varepsilon \neq 0$. Numerical experiments performed in section \ref{SecResShock} are a preliminary exploration of this property.\\
\end{itemize}

The fourth and last property concerns the eigenvalues of the relaxation matrix ${\bf S}$ in (\ref{MatS}).
\begin{proposition}
Let us assume that the nodes $\theta_\ell$ in (\ref{DA}) are sorted in increasing order :
$$
0<\theta_1<\theta_2<\cdots<\theta_L,
$$
and that the weights are positive: $\mu_\ell>0$. Then 0 is a simple eigenvalue of ${\bf S}$. Moreover, the $L$ nonzero eigenvalues $\lambda_\ell$ of ${\bf S}$ are real negative and satisfy:
\begin{equation}
\lambda_L<-\theta_L^2<\cdots<-\theta_{\ell+1}^2<\lambda_\ell<-\theta_\ell^2<\cdots<\lambda_1<-\theta_1^2<0.
\label{Encadre}
\end{equation}
\label{PropVpS}
\end{proposition}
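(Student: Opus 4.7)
The plan is to exploit the block triangular structure of $\mathbf{S}$ that is apparent from (\ref{MatS}): the first column vanishes, so
$$
\mathbf{S}=\begin{pmatrix} 0 & -\varepsilon\,\boldsymbol{\mu}^{T}\\ \mathbf{0} & \mathbf{S}'\end{pmatrix},\qquad \mathbf{S}'=-\bigl(\mathbf{D}+\varepsilon\,\gamma_\alpha\,\mathbf{p}\,\boldsymbol{\mu}^{T}\bigr),
$$
with $\mathbf{D}=\operatorname{diag}(\theta_1^{2},\ldots,\theta_L^{2})$, $\mathbf{p}=(\theta_1^{2\alpha-1},\ldots,\theta_L^{2\alpha-1})^{T}$ and $\boldsymbol{\mu}=(\mu_1,\ldots,\mu_L)^{T}$. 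The characteristic polynomial factors as $\det(\mathbf{S}-\lambda \mathbf{I})=-\lambda\,\det(\mathbf{S}'-\lambda \mathbf{I})$, so the spectrum of $\mathbf{S}$ is $\{0\}\cup\operatorname{spec}(\mathbf{S}')$. Once one proves that $\mathbf{S}'$ has $L$ distinct nonzero eigenvalues, the simplicity of $0$ in $\operatorname{spec}(\mathbf{S})$ follows at once.

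The heart of the proof is therefore the spectral analysis of the rank-one perturbation $-\mathbf{S}'=\mathbf{D}+\mathbf{a}\boldsymbol{\mu}^{T}$ of a diagonal matrix with strictly ordered positive entries, where $\mathbf{a}=\varepsilon\,\gamma_\alpha\,\mathbf{p}$ has positive components (recall $\gamma_\alpha>0$ and $\theta_\ell>0$, and we also use $\mu_\ell>0$). Setting $\sigma=-\lambda$, I would apply the matrix determinant lemma
$$
\det\bigl(\mathbf{D}-\sigma\mathbf{I}+\mathbf{a}\boldsymbol{\mu}^{T}\bigr)=\det(\mathbf{D}-\sigma\mathbf{I})\Bigl(1+\boldsymbol{\mu}^{T}(\mathbf{D}-\sigma\mathbf{I})^{-1}\mathbf{a}\Bigr),
$$
which is valid for $\sigma\notin\{\theta_1^{2},\ldots,\theta_L^{2}\}$, to reduce the eigenvalue problem to the secular equation
$$
f(\sigma):=\sum_{\ell=1}^{L}\frac{c_\ell}{\sigma-\theta_\ell^{2}}=1,\qquad c_\ell:=\varepsilon\,\gamma_\alpha\,\theta_\ell^{2\alpha-1}\mu_\ell>0.
$$

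The last step is a standard monotonicity and limit argument. On each interval $(\theta_\ell^{2},\theta_{\ell+1}^{2})$, $\ell=1,\ldots,L-1$, the function $f$ is continuous, strictly decreasing (since $f'(\sigma)=-\sum_\ell c_\ell/(\sigma-\theta_\ell^{2})^{2}<0$), tends to $+\infty$ as $\sigma\to (\theta_\ell^{2})^{+}$ and to $-\infty$ as $\sigma\to (\theta_{\ell+1}^{2})^{-}$, so it takes the value $1$ exactly once at some $\sigma_\ell\in(\theta_\ell^{2},\theta_{\ell+1}^{2})$. On $(\theta_L^{2},+\infty)$, $f$ decreases from $+\infty$ to $0^{+}$, hence crosses $1$ exactly once at some $\sigma_L>\theta_L^{2}$. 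On $(-\infty,\theta_1^{2})$, every summand is negative so $f<0$ and no root exists; this is the crucial step that guarantees the positivity of all $\sigma_\ell$ (equivalently, the negativity of all $\lambda_\ell=-\sigma_\ell$). Relabelling $\lambda_\ell=-\sigma_\ell$ and reversing the order then yields the interlacing inequalities (\ref{Encadre}) and, in particular, $L$ distinct negative eigenvalues, which closes the argument.

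The only delicate point is the sign analysis on $(-\infty,\theta_1^{2})$: the fact that $-\theta_1^{2}$ is a strict upper bound for $\lambda_1$ rests on the positivity assumption $\mu_\ell>0$ and on $\gamma_\alpha>0$, both of which are needed to make every $c_\ell$ strictly positive. This is the same positivity condition that was already required in Proposition \ref{PropNrjDA} for the energy to be well defined, which explains its reappearance here.
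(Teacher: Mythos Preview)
Your argument is correct. Both proofs ultimately locate the nonzero eigenvalues by a sign/monotonicity analysis of the same polynomial, but the routes to that polynomial differ. The paper works on the full $(L+1)\times(L+1)$ determinant, performing row and column operations to factor $P_{\bf S}(\lambda)=\lambda\,Q_{\bf S}(\lambda)$, and then evaluates the signs of $P_{\bf S}$ at $0^-$, at each $-\theta_j^{2}$, and at $-\infty$ to apply the intermediate value theorem. You instead read off the block upper–triangular structure of ${\bf S}$ immediately (zero first column), separate the eigenvalue $0$, and recognise the remaining $L\times L$ block as a rank-one update of a diagonal matrix with distinct entries; the matrix determinant lemma then turns the eigenvalue problem into the secular equation $\sum_\ell c_\ell/(\sigma-\theta_\ell^{2})=1$ with $c_\ell>0$, for which the interlacing is a textbook monotonicity argument. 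Your approach is more conceptual and reusable (any rank-one perturbation of a diagonal matrix with strictly positive coupling coefficients gives the same interlacing), while the paper's hands-on determinant manipulation has the virtue of producing the explicit polynomial $Q_{\bf S}$ directly. One small point worth making explicit in your write-up: the secular equation misses the nodes $\sigma=\theta_\ell^{2}$, and you should note (as the polynomial form shows at once) that these are not eigenvalues, so the $L$ roots you find on the open intervals exhaust the spectrum of ${\bf S}'$.
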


The proof is given in appendix \ref{SecProofS}. Three remarks are raised by proposition \ref{PropVpS}:
\begin{enumerate}
\item for $L=1$, the eigenvalue is explicitly known:
\begin{equation}
\lambda_1=-\theta_1^2-\varepsilon\,\gamma_\alpha\,\theta_1^{2\alpha-1}\,\mu_1.
\label{N1}
\end{equation}
\item a lower bound of the spectral radius of ${\bf S}$ is obtained:
\begin{equation}
\varrho({\bf S})>\theta_L^2;
\label{RayonSpectral}
\end{equation}
\item as in proposition \ref{PropNrjDA}, the positivity of the weights $\mu_\ell$ is a crucial hypothesis. In the contrary case, one observes numerically that the eigenvalues of ${\bf S}$ do not satisfy (\ref{Encadre}). Moreover, complex conjugate roots can be obtained.
\end{enumerate}


\section{Numerical modeling}\label{SecNum}

\subsection{Numerical scheme}\label{SecNumSchem}

In order to integrate the system (\ref{SystHyper}), one introduces a uniform mesh size $\Delta x$ and a variable time step $\Delta t_n$. The approximation of the exact solution ${\bf U}(x_j=j\,\Delta x,t_n=t_{n-1}+\Delta t_n)$ is denoted by ${\bf U}_j^n$. Unsplit integration of (\ref{SystHyper}) is not optimal, because the stability condition typically implies \cite{LeVeque02}
\begin{equation}
\Delta t_n\leq \min\left(\frac{\textstyle \Delta x}{\textstyle a^n_{\max}},\,\frac{\textstyle 2}{\textstyle \varrho({\bf S})}\right),
\label{CFL}
\end{equation}
where $a^n_{\max}=a+b\,\max(u_j^n)$ is the maximum numerical velocity at time $t_n$. As shown in proposition \ref{PropVpS}, the spectral radius of the relaxation matrix $\varrho({\bf S})$ grows with the maximal node of quadrature (\ref{RayonSpectral}), penalizing the standard CFL condition. Moreover, solving directly (\ref{SystHyper})  requires to build an adequate scheme for the full system with source term.

\paragraph{Splitting}A more efficient strategy is adopted here. Equation (\ref{SystHyper}) is split into a hyperbolic step
\begin{equation}
\frac{\textstyle \partial}{\textstyle \partial t}{\bf U}+\frac{\textstyle \partial}{\textstyle \partial x}{\bf F}({\bf U})={\bf 0},
\label{SplitPropa}
\end{equation}
and a relaxation step
\begin{equation}
\frac{\partial}{\partial t}{\bf U}={\bf S}\,{\bf U}+\bf \delta(x)\,{\bf G}(t).
\label{SplitDiffu}
\end{equation}
The discrete operators to solve (\ref{SplitPropa}) and (\ref{SplitDiffu}) are denoted by ${\bf H}_a$ and ${\bf H}_b$, respectively. The Strang splitting \cite{LeVeque02,Holden11} is then used between $t_n$ and $t_{n+1}$, solving successively (\ref{SplitPropa}) and (\ref{SplitDiffu}) with adequate time increments:
\begin{equation}
\begin{array}{lllll}
\displaystyle
&\bullet& {\bf U}_{j}^{(1)}&=&{\bf H}_{b}\left(\frac{\Delta t_n}{ \ts2}\right)\,{\bf U}_{j}^{n},\\
[6pt]
\displaystyle
&\bullet& {\bf U}_{j}^{(2)}&=&{\bf H}_{a}\left(\Delta t_n\right)\hspace{0.2cm}{\bf U}_{j}^{(1)},\\
[6pt]
\displaystyle
&\bullet& {\bf U}_{j}^{n+1}&=&{\bf H}_{b}\left(\frac{\Delta t_n}{\ts 2}\right)\,{\bf U}_{j}^{(2)}.
\end{array}
\label{AlgoSplitting}
\end{equation}
Provided that ${\bf H}_a$ and ${\bf H}_b$ are second-order accurate and stable operators, the time-marching (\ref{AlgoSplitting}) gives a second-order accurate approximation of the original equation (\ref{SystHyper}).

\paragraph{Hyperbolic step}The homogeneous equation (\ref{SplitPropa}) is solved by a conservative scheme for nonlinear hyperbolic PDE:
\begin{equation}
\begin{array}{l}
\displaystyle
u_j^{n+1}=u_j^n-\frac{\textstyle \Delta t_n}{\textstyle \Delta x}\left(F^{(1)}_{j+1/2}-F^{(1)}_{j-1/2}\right),\\
[12pt]
\displaystyle
\phi_{j,\ell}^{n+1}=\phi_{j,\ell}^n-\gamma_\alpha\,\theta_{\ell}^{2\alpha-1}\frac{\textstyle \Delta t_n}{\textstyle \Delta x}\left(F^{(1)}_{j+1/2}-F^{(1)}_{j-1/2}\right), \hspace{1cm} \ell=1,\cdots,L,
\end{array}
\label{TVD}
\end{equation}
where $F^{(1)}_{j\pm1/2}$ is the numerical flux function of the advection-Burger's part in (\ref{EDP1}). In practice, a second-order TVD scheme with MC-limiter is used in our numerical experiments \cite{LeVeque02}. The stability analysis of (\ref{TVD}) yields the optimal CFL condition 
\begin{equation}
\Upsilon=\frac{\textstyle a^n_{\max}\,\Delta t_n}{\textstyle \Delta x} \leq 1.
\label{CFLopti}
\end{equation}

\paragraph{Relaxation step} Since $\varepsilon$ and the quadrature coefficients $\mu_\ell$, $\theta_\ell$ do not vary with time, ${\bf S}$ is constant in time and the relaxation step (\ref{SplitDiffu}) can be solved exactly. Without forcing, one obtains
\begin{equation}
{\bf H}_b\left(\frac{\Delta\,t}{2}\right)\,{\bf U}_j = e^{{\bf S}\frac{\Delta\,t}{2}}\,{\bf U}_j.
\label{SplitDiffuExp}
\end{equation}
The matrix exponential is computed numerically using a $(6,6)$ Pad\'e approximation in the {\it scaling and squaring method} \cite{Moler03}. Since $\mu_\ell>0$, proposition \ref{PropVpS} ensures that the eigenvalues of ${\bf S}$ are real negative; as a consequence, this approximation is stable. If the physical parameters are constant in space, as considered in the forthcoming numerical experiments, then ${\bf S}$ is constant. Therefore the computation (\ref{SplitDiffuExp}) needs to be done only once at each time step, leading to a negligible computational cost. This part of the splitting is unconditionally stable, 

\paragraph{Properties of the coupling}

The operators ${\bf H}_a$ and ${\bf H}_b$ are second-order accurate and exact, respectively. As a consequence, the Strang splitting (\ref{AlgoSplitting}) is second-order accurate.

The global stability requirement is (\ref{CFLopti}) and is not penalized by the relaxation step. In other words, the time step only depends on the advection and Burger's coefficients in (\ref{EDP}). In particular, $\Delta t_n$ does not depend on the coefficients of the diffusive representation. In practice, $\Delta t_n$is computed after  the second iteration of ${\bf H}_b$ by (\ref{CFLopti}).


\subsection{Quadrature coefficients}\label{SecNumQuad}

It remains to compute the set $\{(\mu_\ell,\theta_\ell)\}$ of $2L$ coefficients involved in the hyperbolic step (\ref{TVD}) and the relaxation step (\ref{SplitDiffuExp}). For this purpose, two different approaches can be employed. The most usual one is based on orthogonal polynomials, while the second approach is associated with an optimization process. Both lead to positive quadrature coefficients, which ensures the stability of (\ref{EDP}), as shown by propositions \ref{PropNrjDA} and \ref{PropVpS}. Here we will combine these two approaches: Gaussian formulae yield initial values of the coefficients, and then optimization with constraint is applied.

\paragraph{Gaussian quadrature} Various orthogonal polynomials can be used to evaluate the improper integral (\ref{DR}) introduced by the diffusive representation of fractional derivatives. Historically, the first one has been proposed in \cite{Yuan02}, where a Gauss-Laguerre quadrature is chosen. Its slow convergence was highlighted and then corrected in \cite{Diethelm08} with a Gauss-Jacobi quadrature. This latter method has been modified in \cite{Birk10}, where alternative weight functions are introduced, yielding an improved discretization of the diffusive variable owing to the use of an extended interpolation range. Following this latter modified Gauss-Jacobi approach, while omitting the time and space coordinates for the sake of brevity, the improper integral (\ref{DR}) is then recast as
\begin{equation}
\int_0^{+\infty}\phi(\theta)\,d\theta=\int_{-1}^{+1}\left(1-\tilde{\theta}\right)^\beta\left(1+\tilde{\theta}\right)^\delta\,\tilde{\phi}(\tilde{\theta})\,d\tilde{\theta}\simeq \sum_{\ell=1}^L\tilde{\mu}_\ell\,\tilde{\phi}(\tilde{\theta}_\ell),
\label{GJ1}
\end{equation}
with the modified diffusive variable $\tilde{\phi}$ defined as
$$
\tilde{\phi}(\tilde{\theta})=\frac{\textstyle 4}{\textstyle \left(1-\tilde{\theta}\right)^{\beta-1}\left(1+\tilde{\theta}\right)^{\delta+3}}\,\phi\left(\left(\frac{\textstyle 1-\tilde{\theta}}{\textstyle 1+\tilde{\theta}}\right)^2\right),
$$
and where the weights and nodes $\{(\tilde{\mu}_{\ell},\tilde{\theta}_{\ell})\}$ are computed by standard routines \cite{NRPAS}. According to the analysis of \cite{Birk10}, Section 4, an optimal choice for the coefficients in (\ref{GJ1}) is: $\beta=2\,\overline{\alpha}+1$ and $\delta=-(2\,\overline{\alpha}-1)$, with $\overline{\alpha}=2\,\alpha-1$. Equating the series (\ref{GJ1}) and (\ref{DA}) that both approximate the term \eqref{DR}, the quadrature coefficients are deduced:
\begin{equation}
\mu_\ell=\frac{\textstyle 4\,\tilde{\mu}_\ell}{\textstyle \left(1-\tilde{\theta}_\ell\right)^{\beta-1}\left(1+\tilde{\theta}_\ell\right)^{\delta+3}},\hspace{1cm}
\theta_\ell=\left(\frac{\textstyle 1-\tilde{\theta}_\ell}{\textstyle 1+\tilde{\theta}_\ell}\right)^2.
\label{Birk2}
\end{equation}


\paragraph{Optimization quadrature} As said in section \ref{SecEvolProp}, the dispersion relation of the fractional PDE (\ref{ToyModel}) and of its diffusive approximation (\ref{EDP}) differ only in the symbols (\ref{ChiFD}) and (\ref{ChiDA}) of the pseudo-differential operators. Equating these quantities provides a means to estimate the quadrature coefficients. It is recalled that the low-frequency and high-frequency limits of $\chi$ and $\tilde{\chi}$ differ; see (\ref{BF-HF}). Consequently, the optimisation procedure proposed here is valid only on a limited frequency range.

For a given number $K$ of angular frequencies $\omega_k$, one defines the following objective function
\begin{equation}
\begin{array}{lll}
\ds
{\cal J}_{L,K}\left(\{\mu_\ell,\theta_\ell)\}\right)&=&
\ds\sum_{k=1}^K\left|\frac{\textstyle \tilde{\chi}(\omega_k)}{\textstyle \chi(\omega_k)}-1\right|^2,\\
[8pt]
&=& \ds \sum_{k=1}^K\left|\gamma_\alpha\sum_{\ell=1}^L\mu_\ell\,\theta_\ell^{2\alpha-1}\,\frac{\textstyle (i\omega_k)^{1-\alpha}}{\textstyle \theta_\ell^2+i\omega_k}-1\right|^2,
\end{array}
\label{Objective}
\end{equation}
to be minimized w.r.t parameters $(\mu_\ell,\theta_\ell)$ for $\ell=1,\dots,L$. A straightforward linear minimization of \eqref{Objective} may lead to some negative parameters \cite{TheseBlanc,Blanc13a}, so that a nonlinear optimization with the positivity constraints $\mu_{\ell}\geq 0$ and $\theta_{\ell}\geq 0$ is preferred. 

An additional constraint is induced by the exponential of the matrix ${\bf S}$ in (\ref{SplitDiffuExp}). As noticed in (\ref{RayonSpectral}), large values of $\theta_\ell$ yields a large spectral radius of ${\bf S}$. In this case, the "scaling and squaring method" used to compute the exponential (\ref{SplitDiffuExp}) may be unstable. An additional constraint $\theta_{\ell}\leq \theta_{\text{max}}$ is therefore introduced to avoid the algorithm to diverge. 

The problem of minimization is nonlinear and non-quadratic w.r.t. abscissae $\theta_{\ell}$. To solve it, we use the algorithm SolvOpt \cite{Kappel00} based on the iterative Shor's method \cite{Shor85}. This method can be applied to a large class of functions, and in particular to (\ref{Objective}). It has been validated and applied to various applications, see e.g. \cite{Rekik11} and references therein. 

As for any local algorithm, Shor's method must be initialized with care. The initial values $\mu^{\,0}_{\ell}$ and $\theta^{\,0}_{\ell}$ are obtained by the modified Jacobi method \eqref{Birk2} for $\ell=1,\dots,L$. Doing so, the required positivity constraints are satisfied by the initial guesses, which are admissible solutions to (\ref{Objective}). 

Finally, the angular frequencies $\omega_k$ for $k = 1,\dots,K$ in (\ref{Objective}) are chosen linearly on a logarithmic scale over a given optimization band $[\omega_{\text{min}},\omega_{\text{max}}]$, i.e.
\begin{equation}
\omega_k = \omega_{\text{min}}\left( \frac{\omega_{\text{max}}}{\omega_{\text{min}}}\right)^{\!\frac{k-1}{K-1}}.
\label{OmegaK}
\end{equation}
In forthcoming numerical experiments, we use $\omega_{\text{min}}=\omega_c/10$ and $\omega_{\text{max}}=10\times\omega_c$. The parameter $\theta_{\text{max}}$ is set to $\theta_{\text{max}}=100\,\omega_{\text{max}}$. The number of angular frequencies is chosen equal to $K=2L$.

There is no theoretical argument justifying the choice of the interval $[\omega_{\min}=\omega_c/10, \omega_{\max}=10 \times \omega_c]$. It is only a reasonable choice, which can be sharpened depending on the application at hand. For instance, let us consider the simulation of resonators in musical acoustics \cite{Berjamin16}: then, the optimization range must be included in the range of interest lies in the audible spectrum [20 Hz, 20 kHz]. 

Concerning the upper limit of optimization, high frequencies are generated when $\varepsilon$ is small. Since the spectrum of the signal evolves, it seems strange at first glance to define a given upper limit of optimization. However, this problem exists already in the choice of the spatial discretization, even in the inviscid Burger's equation. Indeed, choosing the spatial mesh $\Delta x$ relies implicitly on the choice of a maximal sampling frequency. For higher frequencies, the number of grid nodes per wavelength is too small to give a reasonable approximation of the PDE under study, and the user assumes that this part of the signal is not useful. In other words, the choice of the upper range of optimization must be consistant with the choice of the spatial discretization.


\paragraph{Validation of the quadrature method}

\begin{figure}[htbp]
\begin{center}
\begin{tabular}{cc}
phase velocity $\upsilon_\varphi$ & attenuation $\eta$ \\
\hspace{-0.8cm}
\includegraphics[scale=0.33]{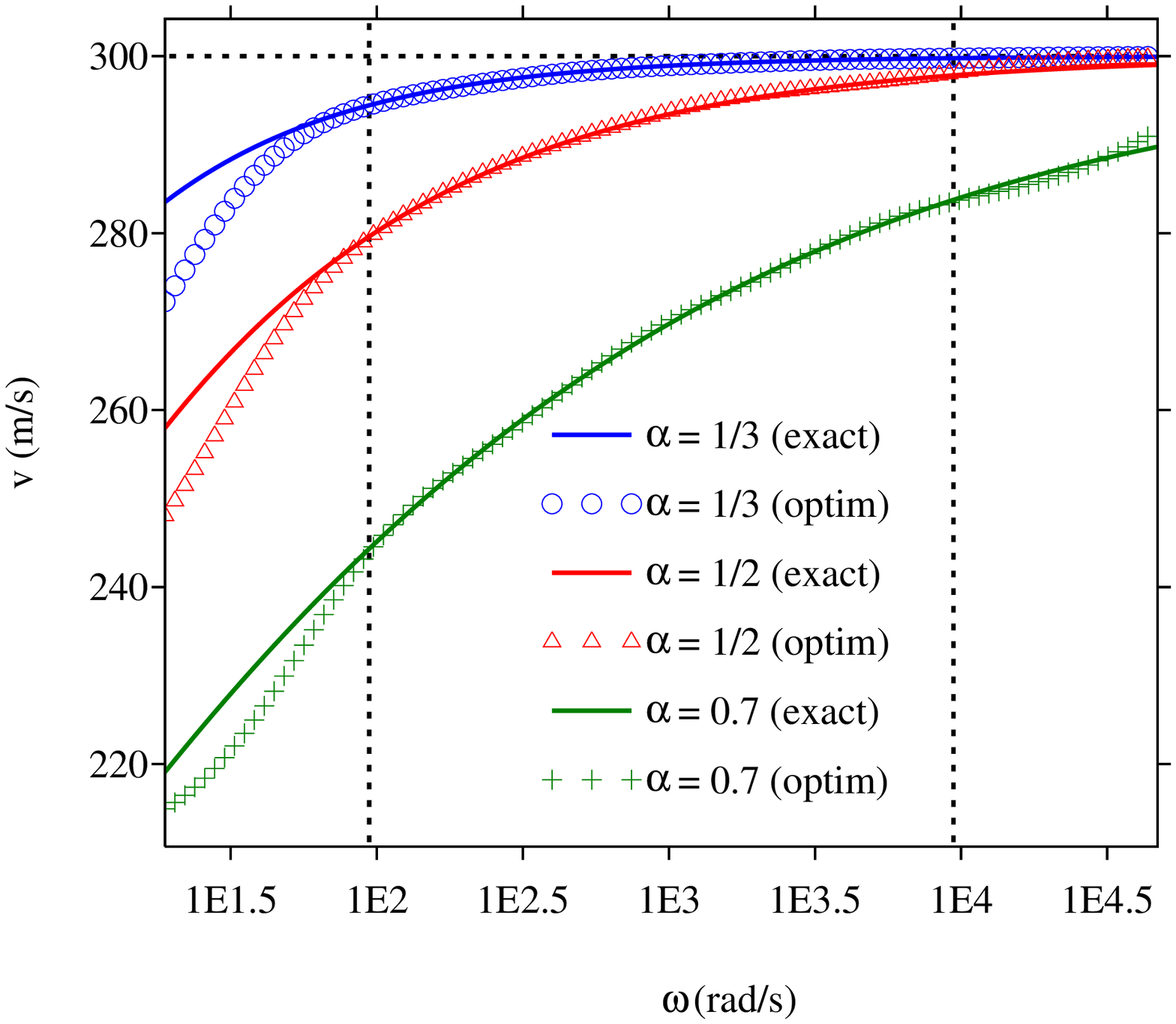}&
\hspace{-0.8cm}
\includegraphics[scale=0.33]{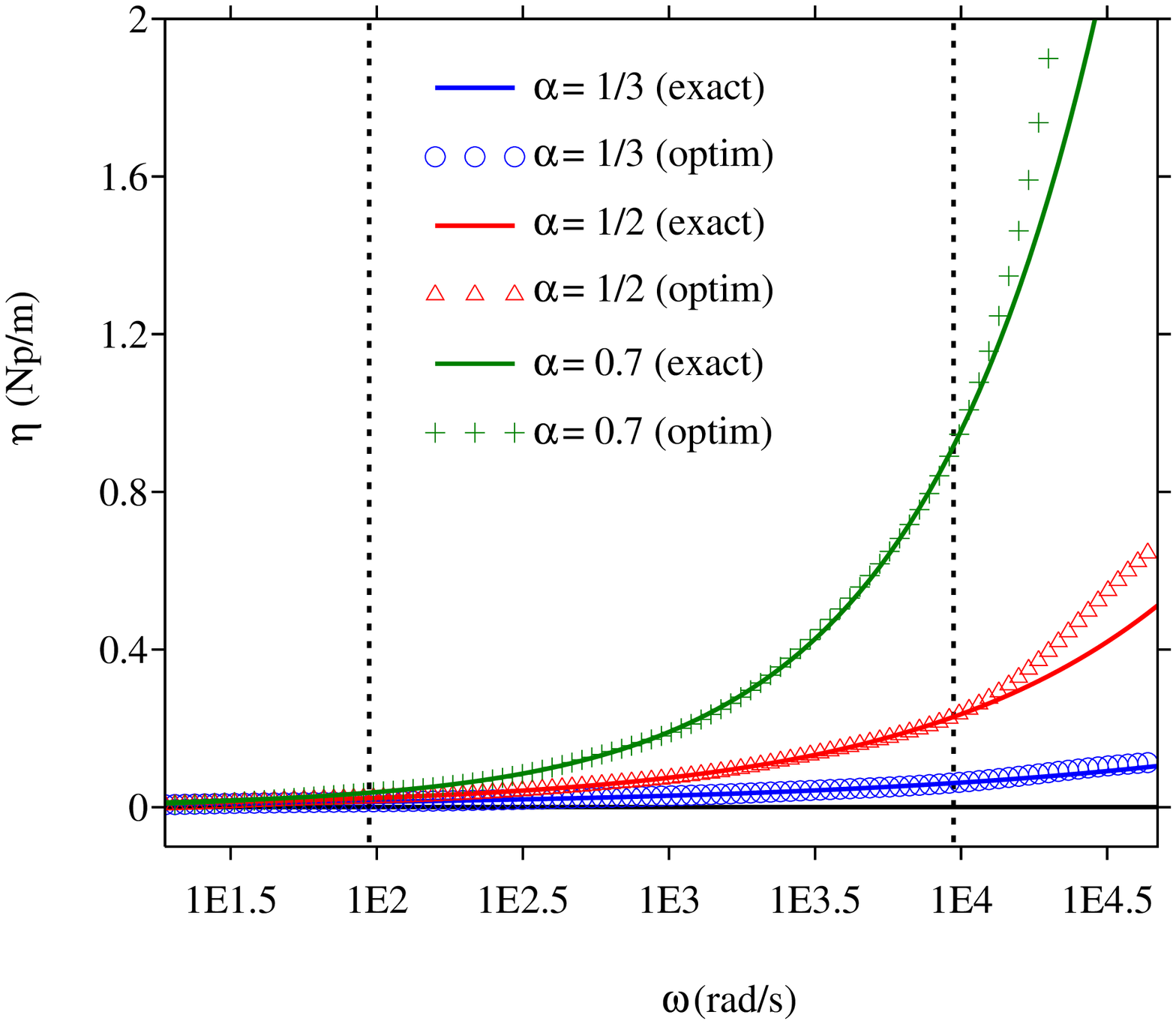}
\end{tabular}
\vspace{-0.5cm}
\caption{\label{FigDispOpti} Comparison between the dispersion curves of the fractional model (\ref{ToyModel}) and of the diffusive model (\ref{EDP}). The parameters are $a=300$ m/s, $b=0$, $\varepsilon=1$ s$^{\alpha-1}$, $\alpha=1/3$, 1/2 and 0.7, and $L=4$ diffusive variables. The quadrature coefficients are obtained with the optimization procedure. The horizontal dotted line denotes the sound velocity $a$. The vertical dotted lines denote the range of optimization.} 
\end{center}
\end{figure}

One considers a wave with a central frequency $f_c=150$ Hz, yielding $\omega_c=942.47$ rad/s (\ref{Mach}). The physical parameters are $a=300$ m/s, $b=0$, $\varepsilon=1$; various values of the fractional order $\alpha$ are investigated (1/3, 1/2 and 0.7). Figure \ref{FigDispOpti} compares the dispersion curves (\ref{Dispersion}) obtained with the exact symbol (\ref{ChiFD}) and the diffusive symbol (\ref{ChiDA}), respectively. Optimization with constraint of positivity is implemented. The results are displayed on the range $[\omega_{\min}/5,\omega_{\max}\times 5]$. Excellent agreement is obtained on $[\omega_{\min},\omega_{\max}]$, whatever the value of $\alpha$. The accuracy decreases outside the range of optimization. It follows from i) the optimization process, ii) the different low-frequency and high-frequency behaviors of the exact and diffusive symbols (see (\ref{BF-HF})). 

\begin{figure}[htbp]
\begin{center}
\begin{tabular}{cc}
(i) & (ii)\\
\hspace{-0.8cm}
\includegraphics[scale=0.33]{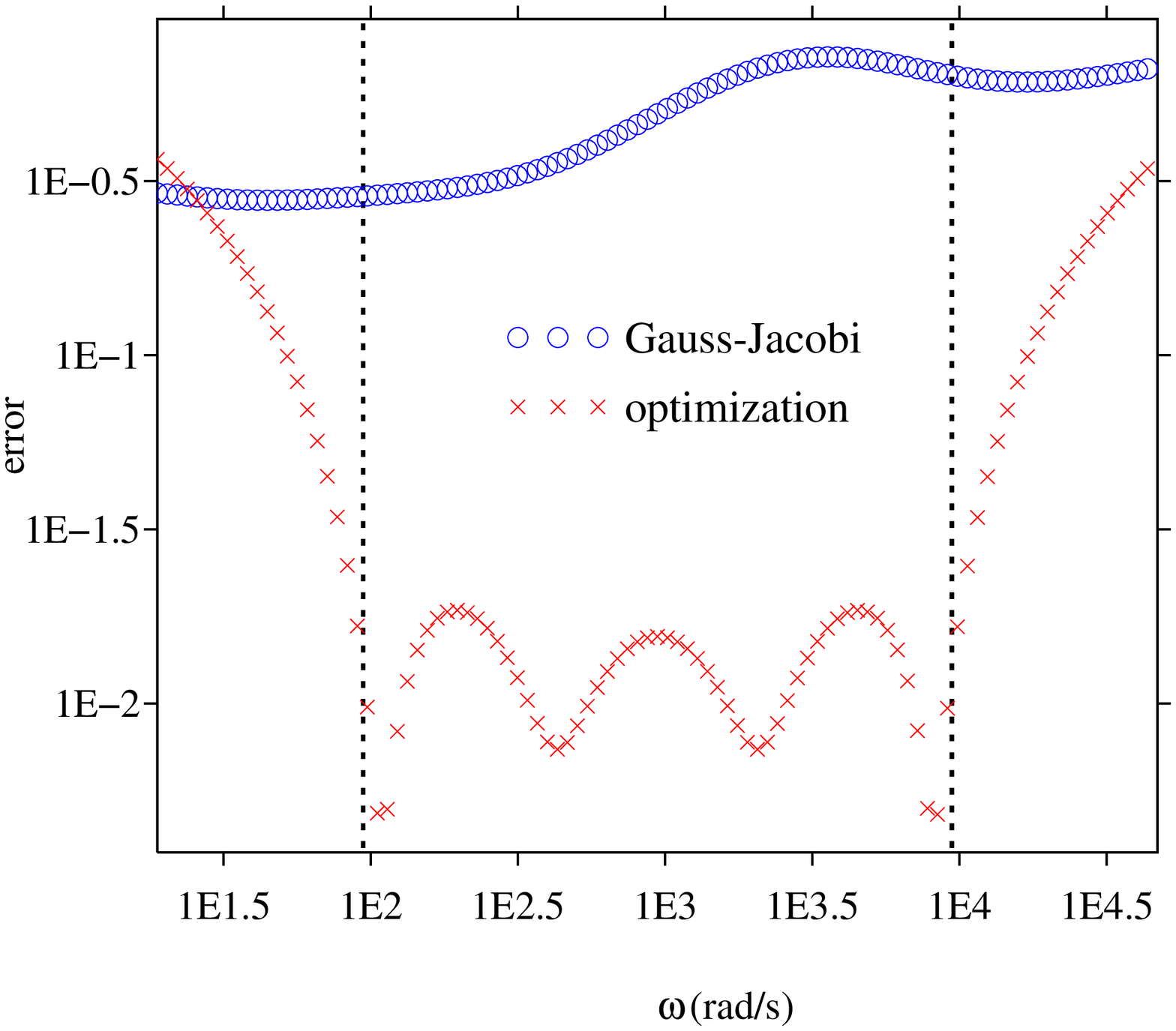}&
\hspace{-0.8cm}
\includegraphics[scale=0.33]{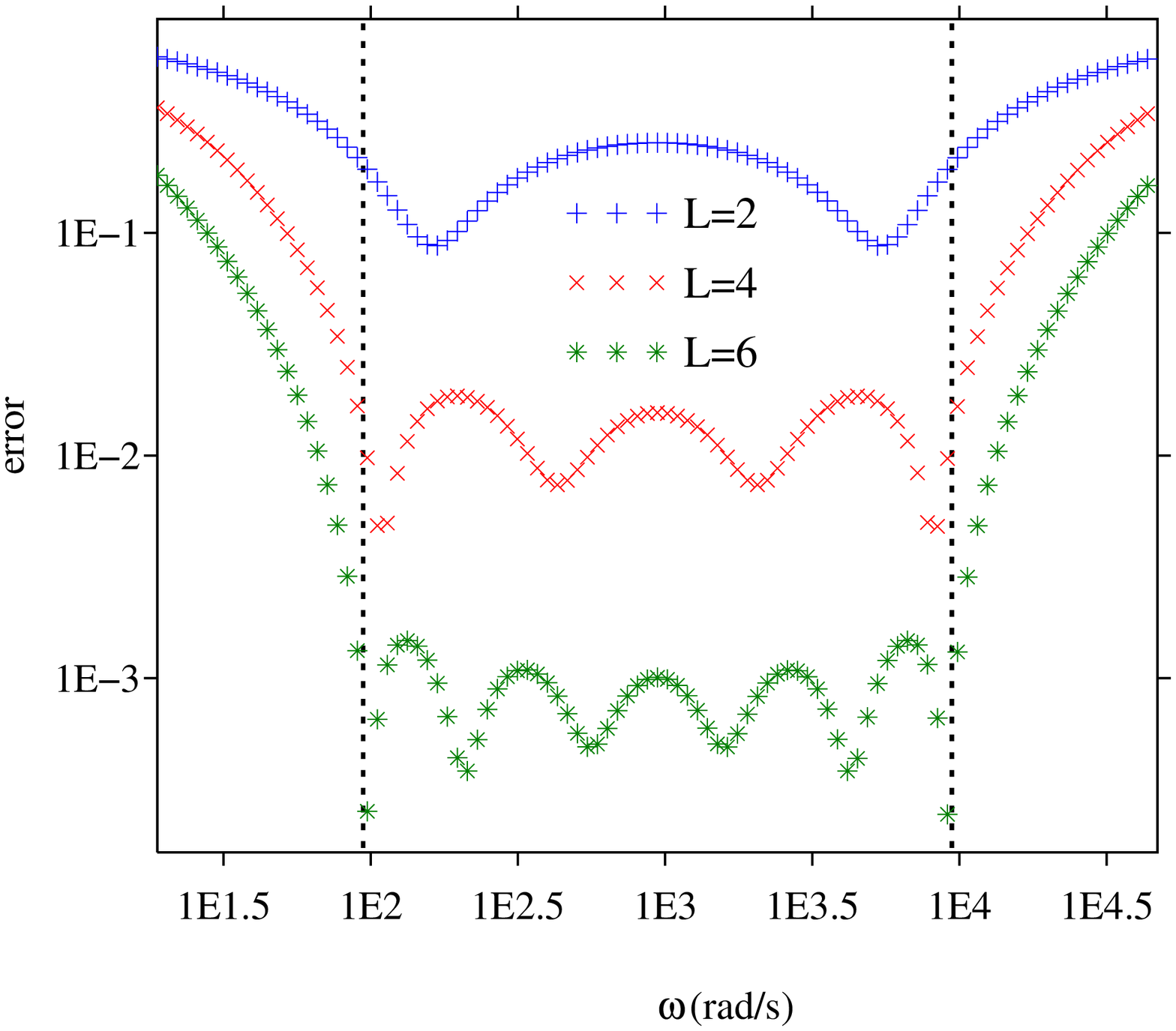}
\end{tabular}
\vspace{-0.5cm}
\caption{\label{FigErreur} Error of model $\left|\frac{\tilde{\chi}(\omega)}{\chi(\omega)}-1\right|$ deduced from (\ref{ChiFD}) and (\ref{ChiDA}). The parameters are $a=300$ m/s, $\varepsilon=1\,\mbox{s}^{-1/2}$ and $\alpha=0.5$. Left row (i): Gauss-Jacobi and optimization methods are compared, for $L=4$ diffusive variables. Right row (ii): optimization is used, and various values $L$ are considered. The vertical dotted lines denote the range of optimization.}
\end{center}
\end{figure}

The objective function (\ref{Objective}) is built by minimizing the error of model $\left|\frac{\tilde{\chi}(\omega)}{\chi(\omega)}-1\right|$ at discrete angular frequencies $\omega_k$. Figure \ref{FigErreur} illustrates this error for continuous values of $\omega$ and in the case $\alpha=0.5$. In (i), the influence of the quadrature method is examined, for $L=4$ diffusive variables. In the interval $[\omega_{\min},\omega_{\max}]$, the error obtained with optimization is roughly 100 times smaller than with Gauss-Jacobi polynomials. Outside this interval, the optimized solution worsens logically.

In figure \ref{FigErreur}-(ii), nonlinear optimization is tested for various numbers of diffusive variables: $L=2$, 4 and 6. Improvement of the diffusive approximation as $L$ increases is observed. In counterpart, the computational cost of the numerical scheme (section \ref{SecNumSchem}) increases linearly with $L$. In practice, we will use the value $L=4$ in forthcoming experiments, which provides a relative error of model near $0.5\,\%$ in the interval of optimization. 

To conclude this section, let us mention that other choices of $\omega_{\min}$ and $\omega_{\max}$ have been tested. Logically, the accuracy of the optimization is degraded if the optimization range is increased. Nevertheless, the results remain much more accurate than those obtained with Gaussian quadrature.


\section{Exact solution of the linear fractional advection}\label{SecExact}

\subsection{Particular cases $\alpha=1/3$ and $\alpha=1/2$}\label{SecExactParti}

We consider the case of linear advection with fractional attenuation. A boundary condition is applied and the initial conditions are null. Taking $b=0$ in (\ref{ToyModel}) leads to the system
\begin{equation}
\left\{
\begin{array}{l}
\ds
\frac{\textstyle \partial u}{\partial t}+a\frac{\textstyle \partial u}{\textstyle \partial x}+\varepsilon\,D^\alpha_t u=\delta_0(x)g(t),\quad t>0,\\
[10pt]
\ds
u(x,0)=0, \quad x\in \mathbb{R}.
\end{array}
\right.
\label{ToyModelLin1}
\end{equation}
Applying a Fourier transform in space and a Laplace transform in time to (\ref{ToyModelLin1}) yields
\begin{equation}
U(k,s)=\frac{G(s)}{s+\varepsilon s^\alpha+iak},
\label{Vks}
\end{equation}
where $s$ is the Laplace variable. One defines $\lambda=(s+\varepsilon s^\alpha)/a$. Since $\real{s}>0$, then $\real{\lambda}>0$. It follows that $1/(\lambda+ik)$ is the Laplace transform of $\exp(-\lambda\,x)$ for $x>0$. Consequently, one gets
\begin{equation}
U(x,s)=\exp\left(-\frac{\varepsilon\,x}{a}\,s^\alpha\right)\,\exp\left(-\frac{x}{a}s\right)\,G(s),\hspace{0.5cm} x>0.
\label{Vxs}
\end{equation}
Setting $y=\frac{\varepsilon x}{a}$, equation (\ref{Vxs}) gives
\begin{equation}
u(x,t)=h_\alpha(y,t)\mathop{*}\limits_{t}g\left(t-\frac{x}{a}\right),\hspace{0.5cm} x>0,
\label{ExactConvol}
\end{equation}
where $h_\alpha(y,t)$ is the inverse Laplace transform of $\exp(-y\,s^\alpha)$. For $\alpha=1/3$ and $\alpha=1/2$, analytical expressions of these inverse transforms are known (see p.~120 of \cite{Mainardi10}): one has 
\begin{subnumcases}{\label{InverseLaplace}}
\ds
h_{1/3}(y,t)=\frac{y}{3^{1/3}\,t^{4/3}}\mbox{Ai}\left(\frac{y}{3^{1/3}\,t^{1/3}}\right),\label{InverseLaplace1}\\ 
[10pt]
\ds
h_{1/2}(y,t)=\frac{y}{2\,\sqrt{\pi}\,t^{3/2}}\,\exp\left(-\frac{y^2}{4\,t}\right).\label{InverseLaplace2}
\end{subnumcases}
In (\ref{InverseLaplace1}), Ai is the Airy function \cite{NRPAS}. The convolution product in (\ref{ExactConvol}) is computed numerically by the Simpson method. 


\subsection{General case}\label{SecExactGene}

\begin{figure}[htbp]
\begin{center}
\begin{tabular}{cc}
\hspace{-0.8cm}
(a) & (b) \\
\hspace{-0.8cm}
\includegraphics[scale=0.33]{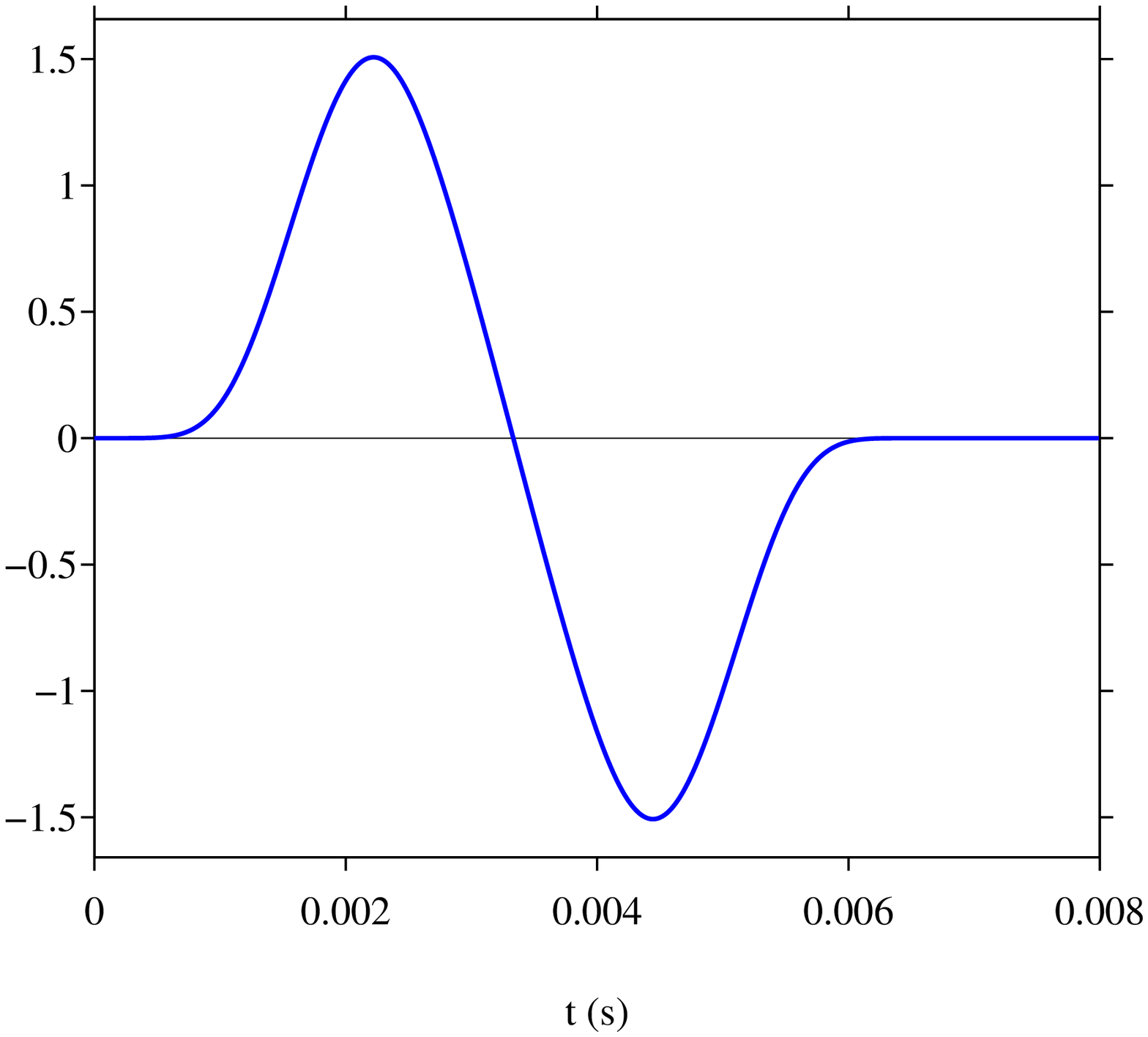}&
\hspace{-0.8cm}
\includegraphics[scale=0.33]{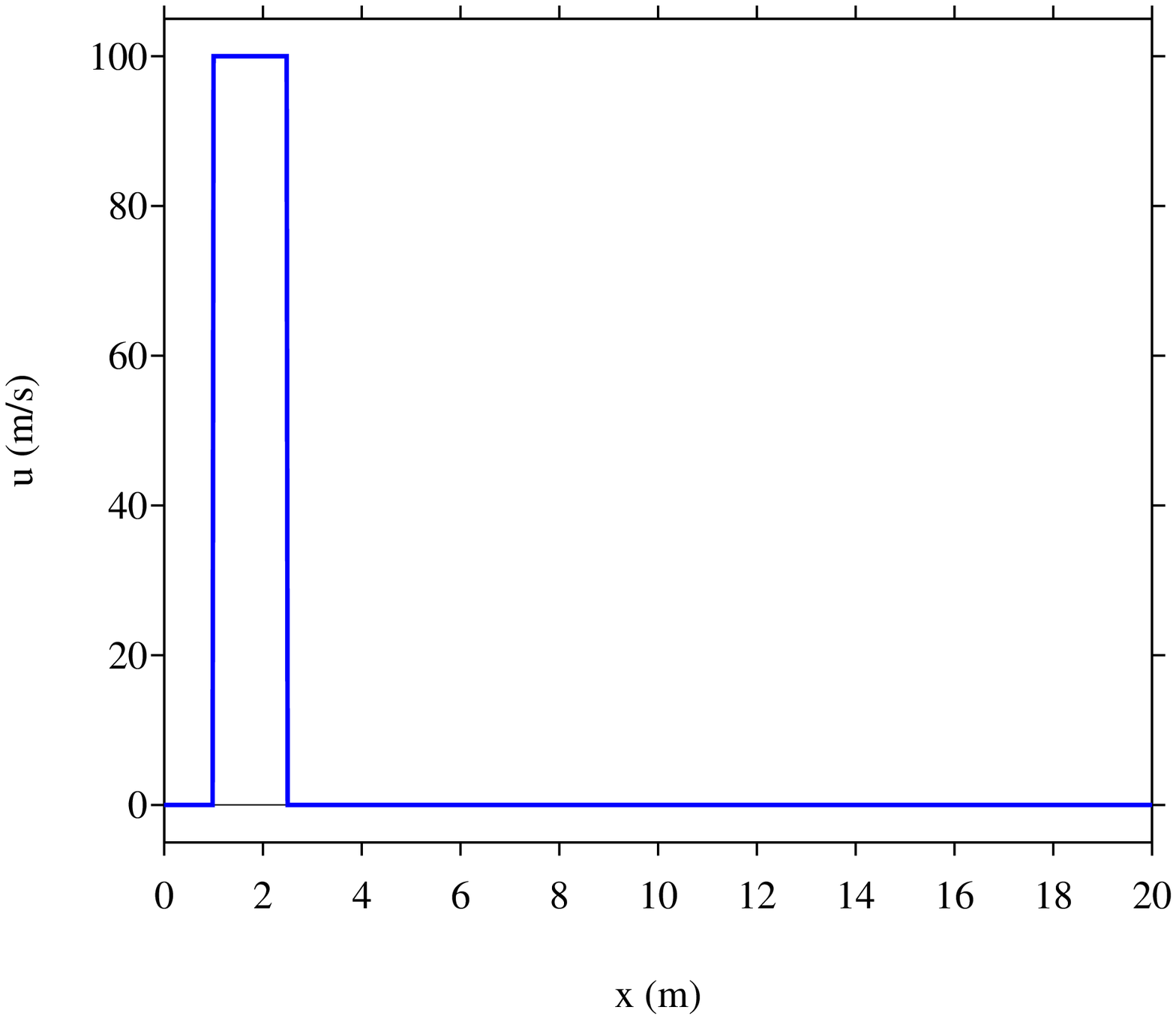} 
\end{tabular}
\vspace{-0.5cm}
\caption{\label{FigSource} Signals used in the numerical experiments. (a): time evolution of the source $g$ (\ref{JKPS_C6}). (b) spatial evolution of the rectangular pulse (\ref{Heaviside}).} 
\end{center}
\end{figure}

The exact solution detailed in section \ref{SecExactParti} is very efficient numerically. But it is restricted to particular values of the fractional order $\alpha$ and to the fractional model (\ref{ToyModel}). Here we detail an alternative approach, more tedious numerically but also more general: arbitrary values of $\alpha$ can be handled, as well as  the diffusive model (\ref{EDP}). To do so, Fourier transforms in time and space (\ref{Fourier}) are applied to (\ref{ToyModel}) or (\ref{EDP}). It gives
\begin{equation}
\hat u(k,\omega)=-i\,\frac{1}{k-k_0}\,g(\omega), \hspace{0.8cm} \mbox{ with  } \hspace{0.4cm} k_0=-\left(\frac{\omega}{a}-i\,\frac{\varepsilon}{a}\chi(\omega)\right).
\label{ExactFourierKF}
\end{equation}
$\chi$ is the symbol of the fractional PDE (\ref{ChiFD}) or the symbol of the diffusive PDE (\ref{ChiDA}). An inverse Fourier transform in space of (\ref{ExactFourierKF}) yields
\begin{equation}
u(x,\omega)=-\frac{i}{2\pi}\int_{-\infty}^{+\infty}\frac{e^{ikx}}{k-k_0}\,dk.
\label{ExactFourierXF}
\end{equation}
In the case $x>0$, the residue theorem provides
\begin{equation}
u(x,\omega)=g(\omega)\,e^{ik_0 x}.
\label{ExactFourierResidue}
\end{equation}
The inverse Fourier in time of (\ref{ExactFourierResidue}) is computed numerically by a quadrature formula on $N_f$ modes, with a frequency step $\Delta f$.


\section{Numerical results}\label{SecRes}

\subsection{Configuration}\label{SecResConf}

In all the forthcoming experiments, a domain of length 20 m is discretized on $N_x=1000$ grid nodes. Unless specified otherwise, the advection parameters are $a=300$ m/s and $b=1$. The number of memory variables is $L=4$. The CFL number is $\Upsilon=0.95$ (\ref{CFLopti}). Two times of excitation are considered: 
\begin{itemize}
\item a source term. The time evolution is a truncated combination of sinusoids with $C^6$ smoothness:
\begin{equation}
g(t) = 
\left\{
\begin{array}{l}
\ds
V\,\sum_{m=1}^4 a_m \sin\,(b_m\,\omega_c\,t)  \mbox{  if }\;0\leq t\leq \frac{1}{f_c},\\
[8pt]
\ds
0 \qquad \mbox{otherwise},
\end{array}
\right. 
\label{JKPS_C6}
\end{equation}
with parameters $b_m=2^{m-1}$, $a_1=1$, $a_2=-21/32$, $a_3=63/768$ and $a_4=-1/512$. The central frequency is $f_c=150$ Hz;
\item an initial condition. The space evolution is a rectangular force pulse:
\begin{equation}
u_0(x)=V\left(H\left(x-x_0\right)-H\left(x-x_0-\lambda\right)\right),
\label{Heaviside}
\end{equation}
where $H$ is the Heaviside function, $x_0=1$ m, and $\lambda=1.5$ m.
\end{itemize} 
These excitations are displayed in figure \ref{FigSource}.

\begin{figure}[htbp]
\begin{center}
\begin{tabular}{cc}
\hspace{-0.8cm}
(a) & (b) \\
\hspace{-0.8cm}
\includegraphics[scale=0.33]{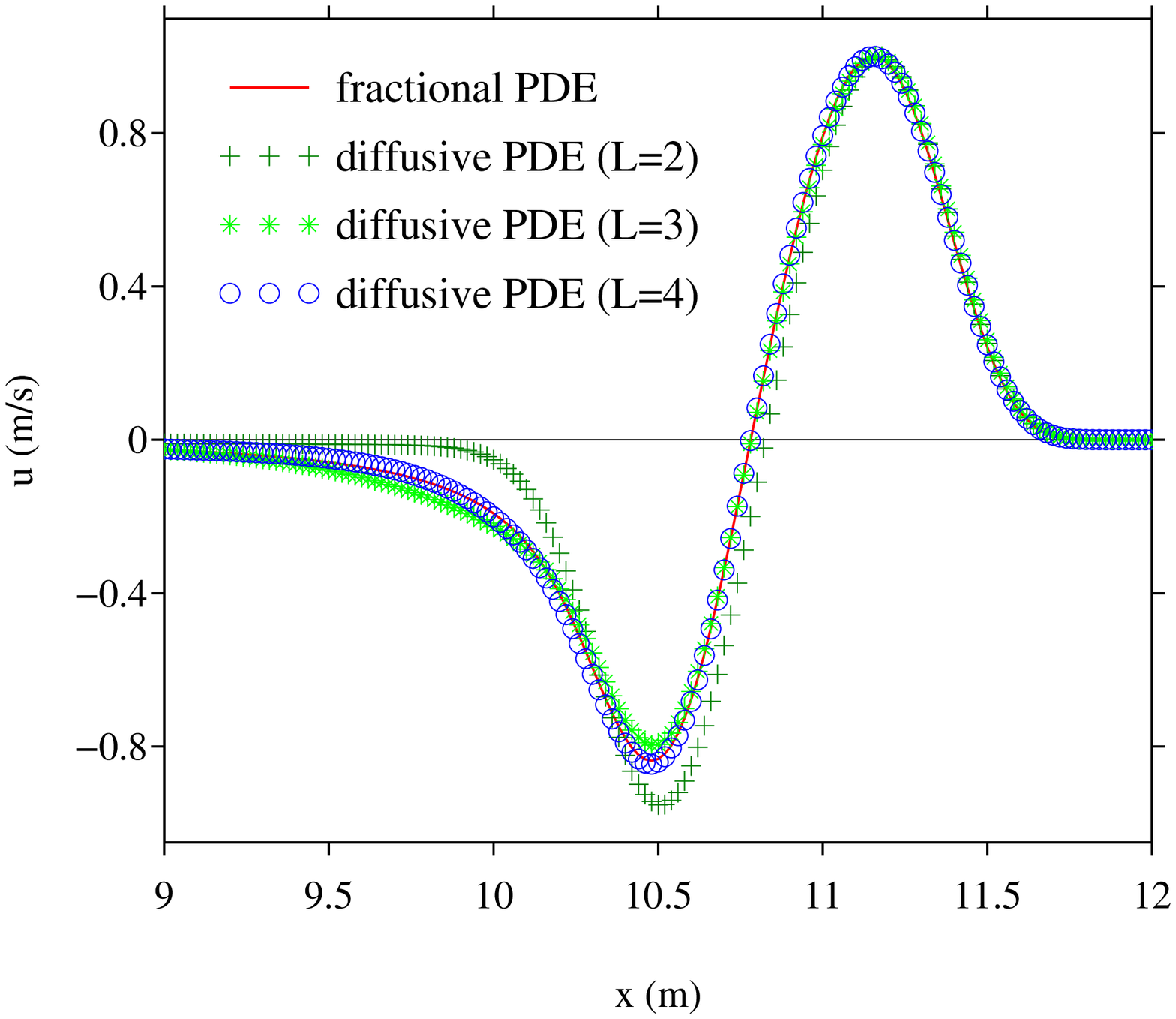}&
\hspace{-0.8cm}
\includegraphics[scale=0.33]{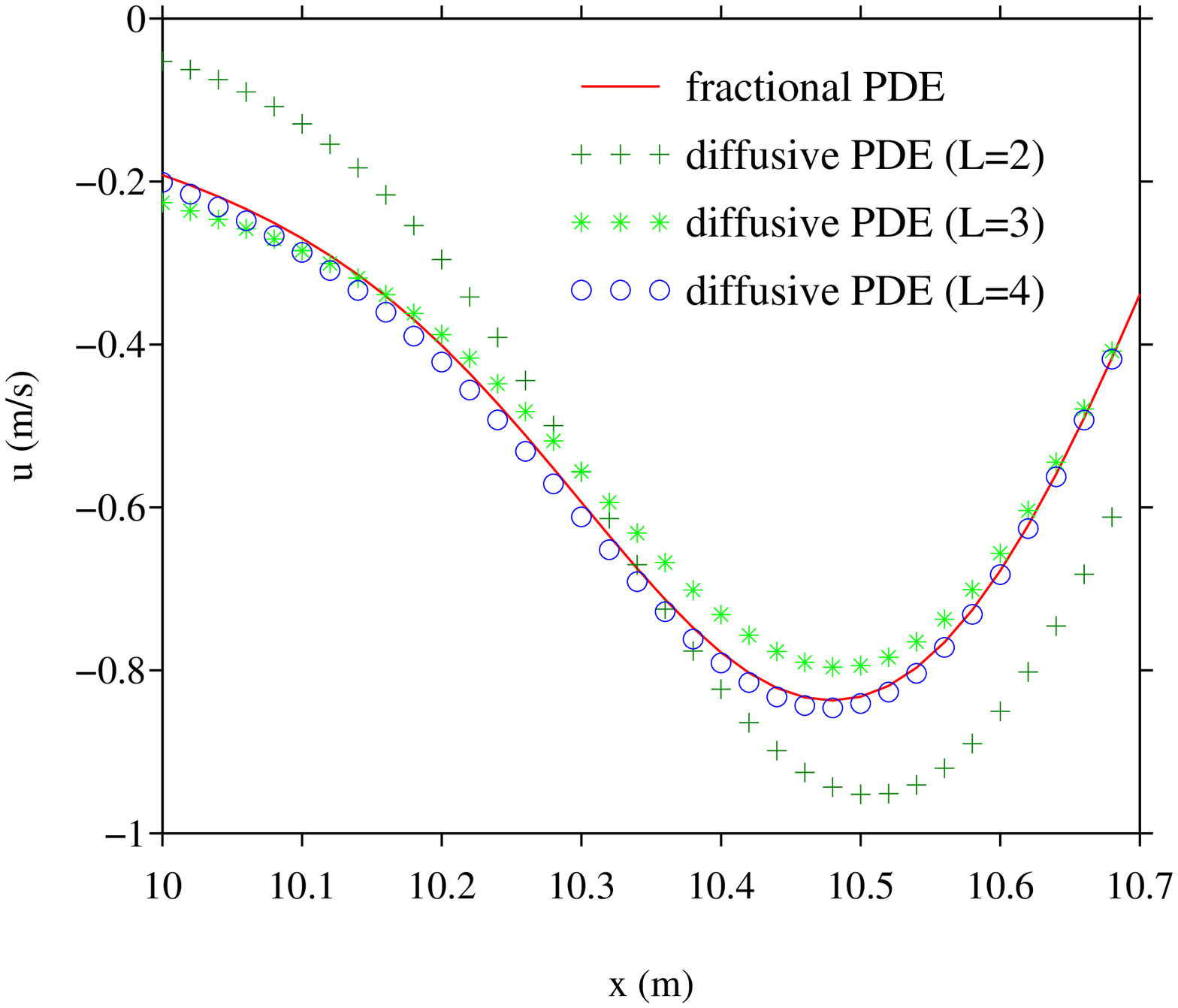} 
\end{tabular}
\vspace{-0.5cm}
\caption{\label{FigTest1-DA} Test 1: exact solutions of the fractional PDE (\ref{ToyModel}) and of the diffusive PDE (\ref{EDP}) with $\alpha=1/3$ at $t=0.04$ s (a) Zoom around one extremum of the wave (b).} 
\end{center}
\end{figure}


\subsection{Linear fractional advection}\label{SecResLinear}

\begin{figure}[htbp]
\begin{center}
\begin{tabular}{cc}
\hspace{-0.8cm}
$\alpha=1/3$ & $\alpha=1/3$ \\
\hspace{-0.8cm}
\includegraphics[scale=0.33]{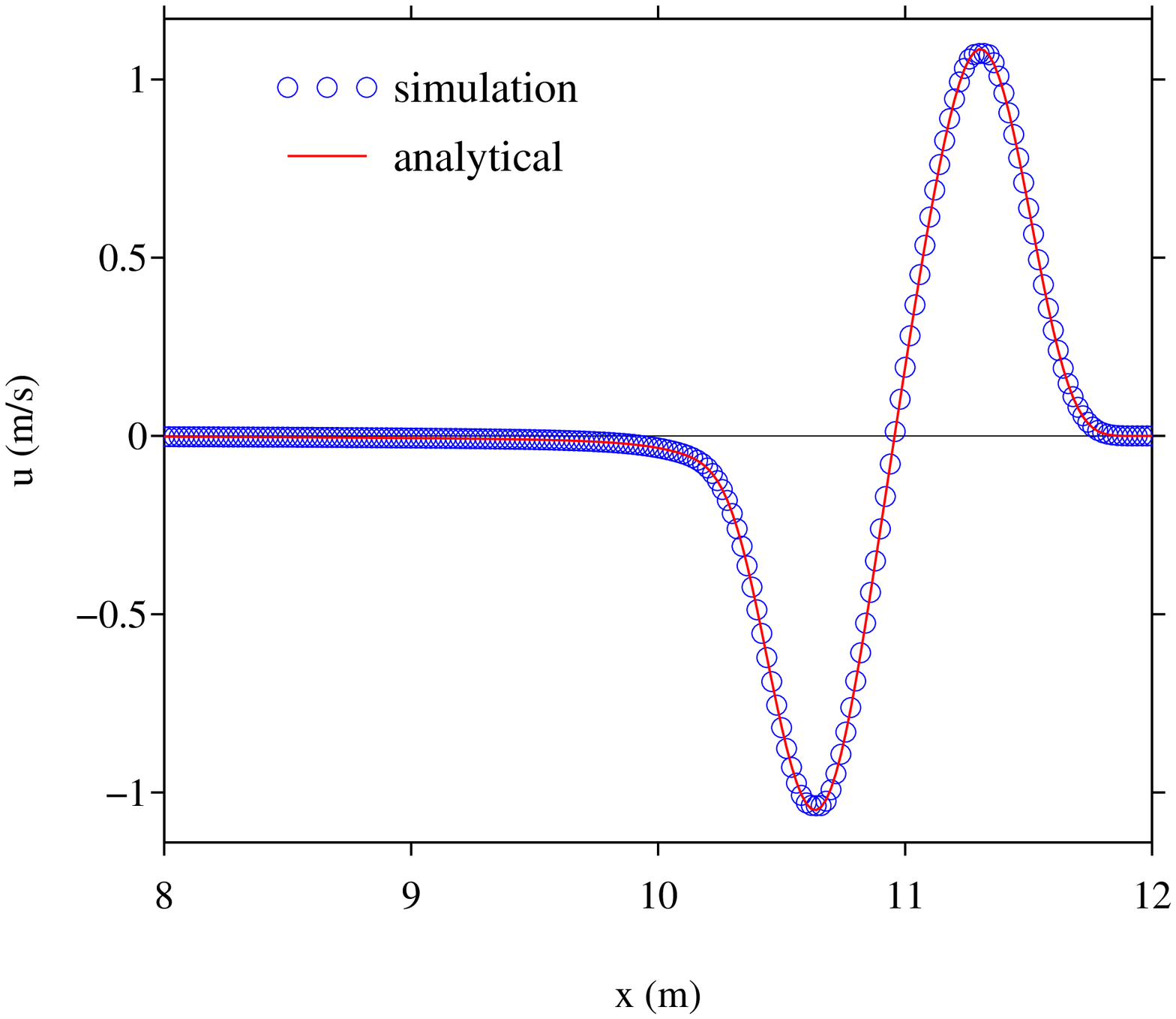}&
\hspace{-0.8cm}
\includegraphics[scale=0.33]{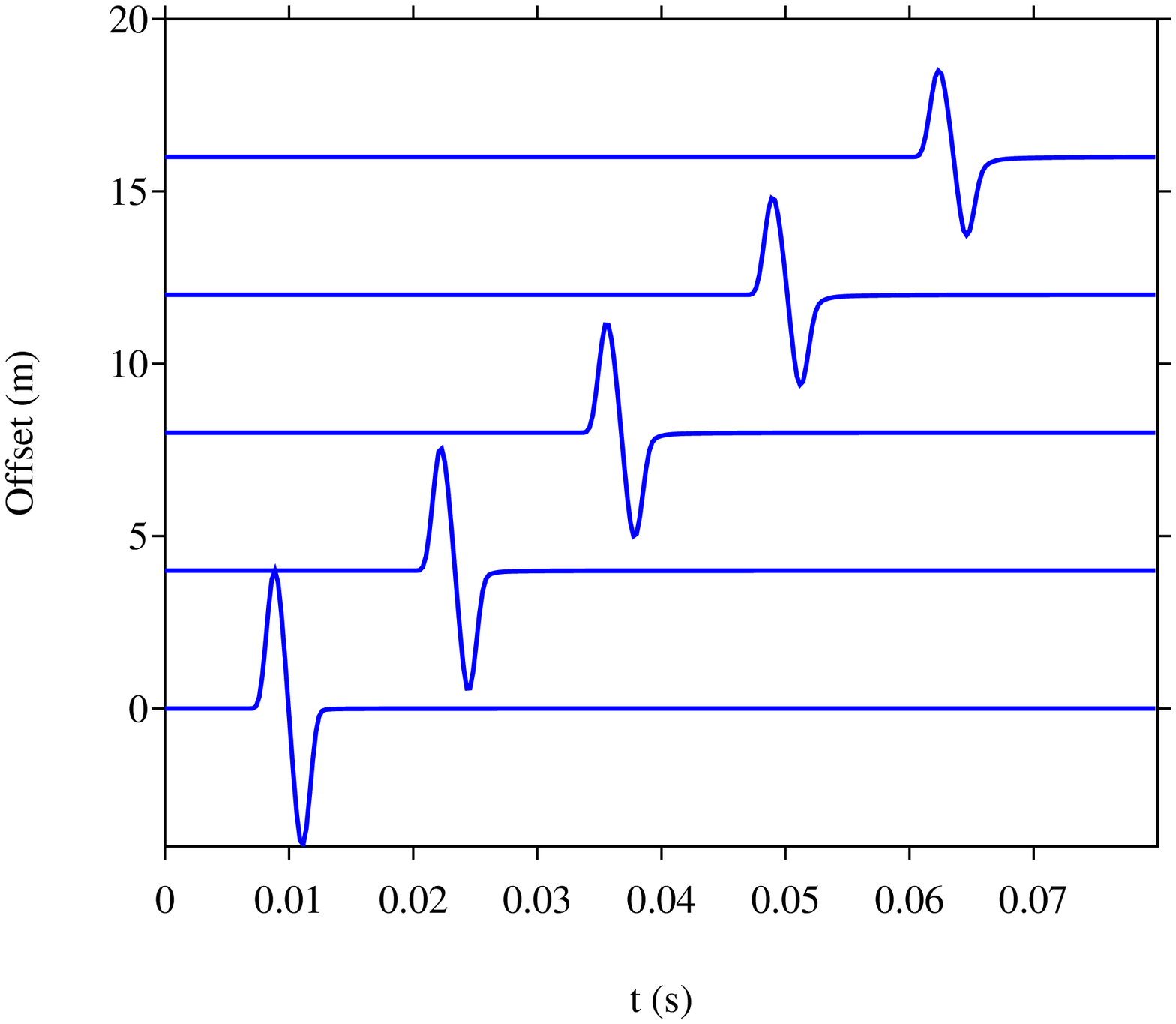} \\
$\alpha=1/2$ & $\alpha=1/2$ \\
\hspace{-0.8cm}
\includegraphics[scale=0.33]{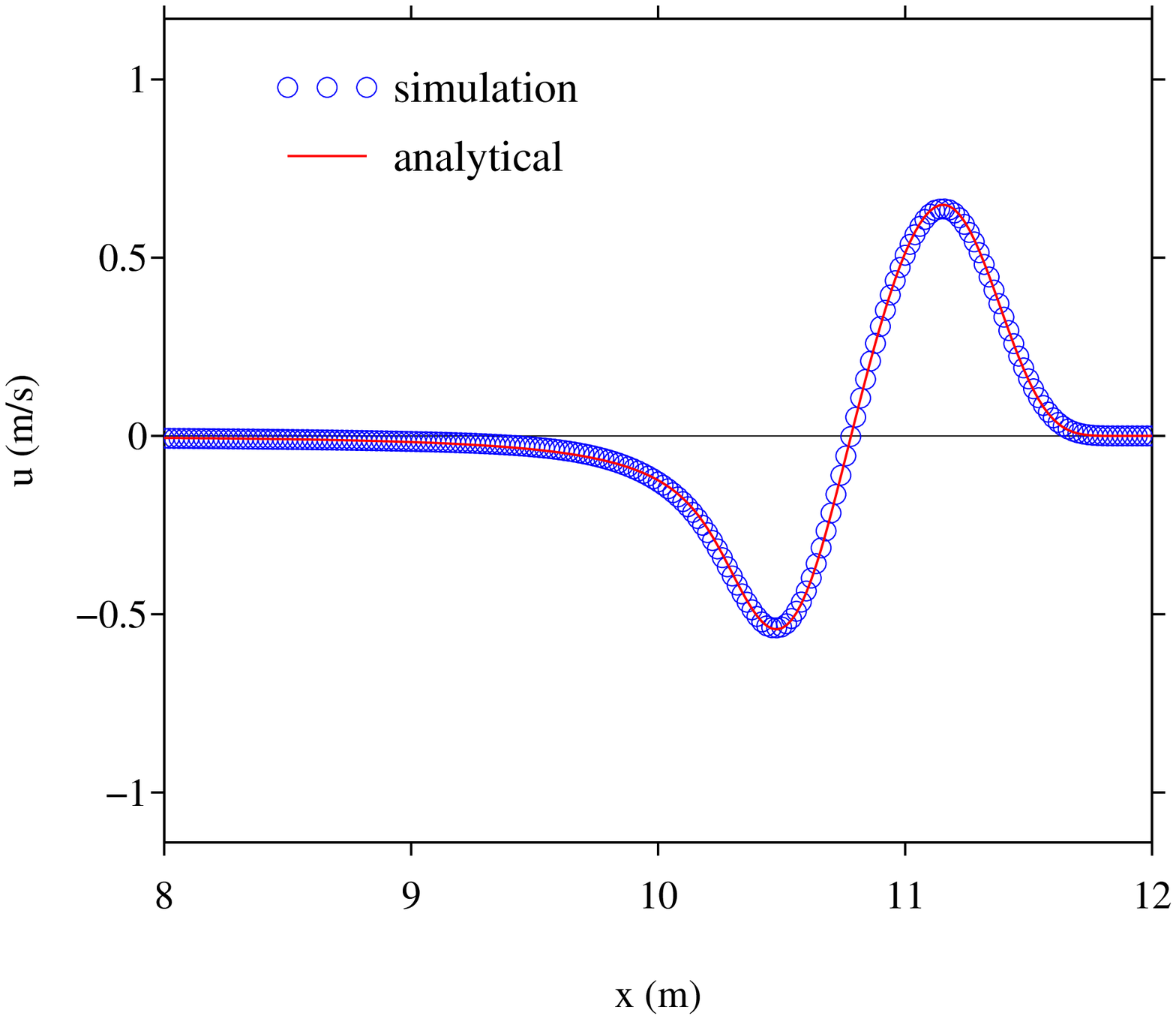}&
\hspace{-0.8cm}
\includegraphics[scale=0.33]{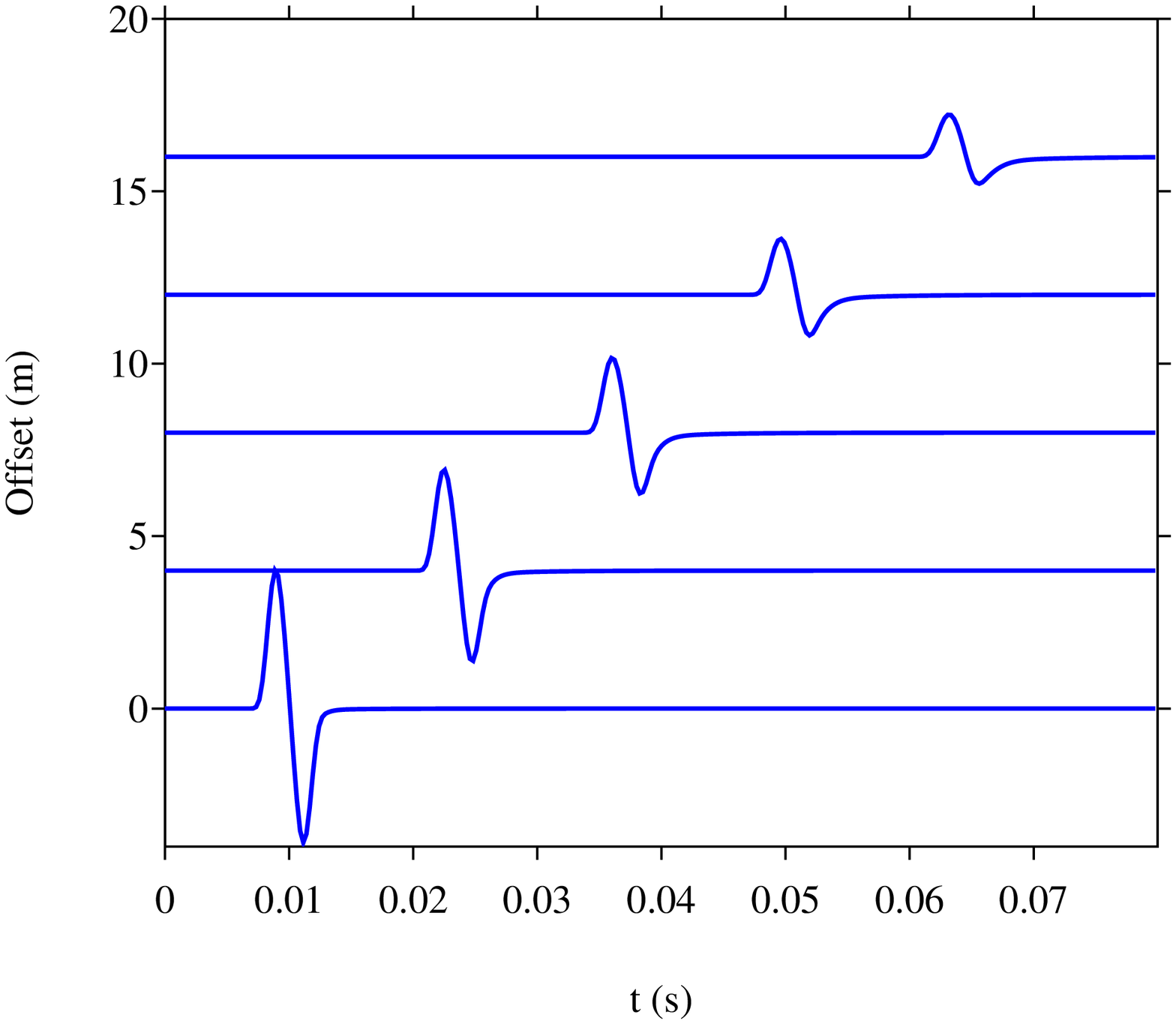} \\
$\alpha=0.7$ & $\alpha=0.7$ \\
\hspace{-0.8cm}
\includegraphics[scale=0.33]{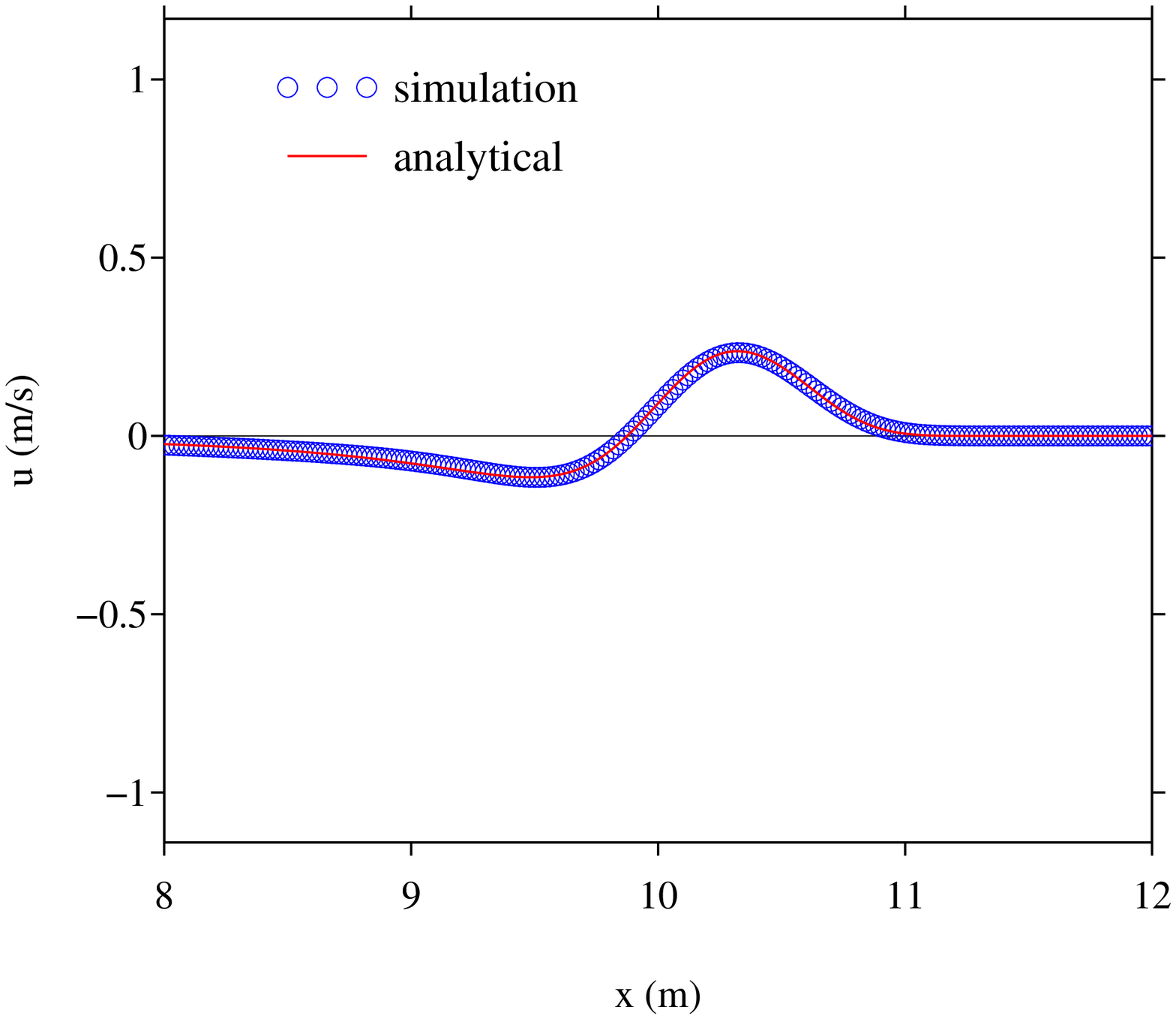}&
\hspace{-0.8cm}
\includegraphics[scale=0.33]{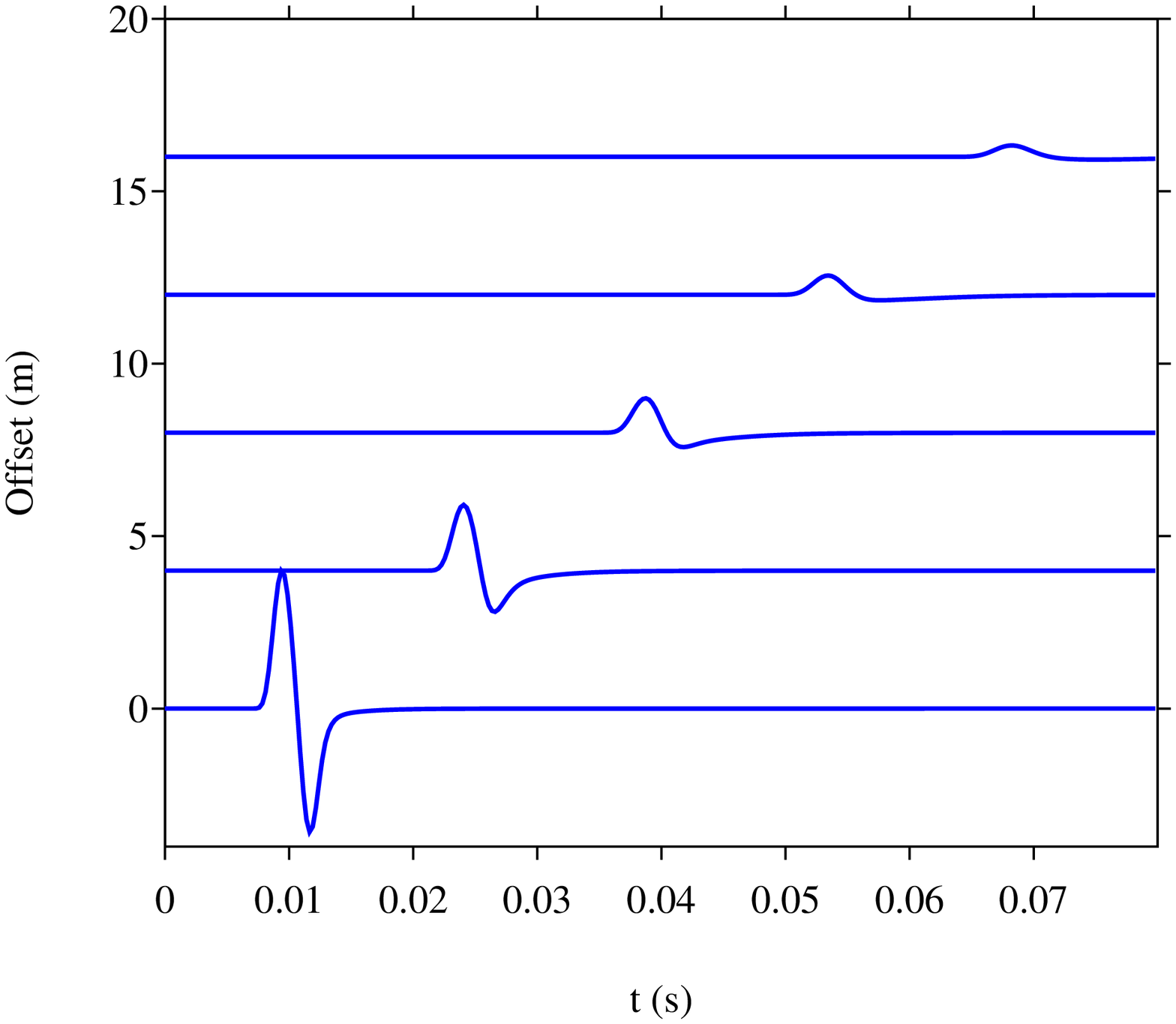} \\
\end{tabular}
\vspace{-0.5cm}
\caption{\label{FigTest1-Sismo} Test 1: time-domain simulations of linear fractional advection, for various orders of the fractional order $\alpha$. Left row: snapshots of the numerical and exact solutions. Right row: simulated seismograms.} 
\end{center}
\end{figure}

The objective of the first test is to validate both the diffusive approximation of the fractional derivative and the numerical strategy. For this purpose, the nonlinearity is neglected, and one tackles pure advection with fractional dissipation: $a=300$ m/s, $b=0$ and $\varepsilon=1$ s$^{\alpha-1}$. The initial conditions are null. The source (\ref{JKPS_C6}) is put at the left boundary of the domain, with an amplitude $V=1$ m/s. 

The figure \ref{FigTest1-DA} illustrates the effect of the number of memory variables on the approximation of the fractional PDE, for $\alpha=1/3$. For this purpose, one displays the exact solution of the fractional PDE (\ref{ToyModel}), and the exact solutions of the diffusive PDE (\ref{EDP}) for various values of $L$. See section \ref{SecExact} for details; the number of Fourier modes is $N_f=2048$, with a frequency step $\Delta f=0.75$ Hz. When $L=2$, a large error of modeling is introduced. On the contrary, $L=4$ ensures a very accurate approximation of the fractional model. This value will be used from now.

To see the effect of the fractional order on wave propagation, one considers then three values of $\alpha$: 1/3, 1/2 and 0.7. The cases $\alpha=1/3$ and $\alpha=1/2$ yield closed-form exact solutions (section \ref{SecExactParti}), whereas $\alpha=0.7$ yields a semi-analytical solution based on Fourier analysis (section \ref{SecExactGene}). The left row of figure \ref{FigTest1-Sismo} compares the numerical and analytical solutions of the diffusive PDE (\ref{EDP}) at $t=0.04$ s, which amounts to 632 time steps. The same vertical scale is used in the three cases to examplify the effect of the fractional order: as predicted by the dispersion analysis (section \ref{SecPbDisp}), the attenuation increases with $\alpha$. On the contrary, the wavefront propagates faster for small values of $\alpha$, because the phase velocity decreases with $\alpha$. In all cases, agreement is obtained between numerical and exact values. 

The right row of figure \ref{FigTest1-Sismo} displays the time evolution of $u$ at the receivers located at $x_r=2+4\,(j-1)$, with $j=1,\,\cdots, 5$, up to $t=0.08$ s (1264 time steps). The vertical scale is chosen so that the maximal value of $u$ at the first receiver (offset 0 m) reaches the null value of $u$ at the second receiver (offset 4 m). In the case $\alpha=1/3$, a small decrease of amplitude is observed during the propagation. In the case $\alpha=1/2$, the attenuation increases, and the wave is more dispersed. The arrival time at the last receiver (offset 16 m) is greater than in the case $\alpha=1/3$, which means that the velocity is smaller when $\alpha$ is greater, as predicted theoretically. These observations are even more pronounced in the case $\alpha=0.7$, where the signal at offset 16 m has almost disappeared at the scale of the figure. The behaviors illustrated on these seismograms are similar to those observed in viscoelasticity, where typical models of attenuation involve quality factors $Q(\omega)\sim Q_0\omega^{-\alpha}$ \cite{Blanc16}.

 
\subsection{Nonlinear advection}\label{SecResNonlinear} 
 
\begin{figure}[htbp]
\begin{center}
\begin{tabular}{cc}
\hspace{-0.8cm}
$\varepsilon=0$, $t_1$ (a) & $\varepsilon=0$, $t_2>t_1$ (b) \\
\hspace{-0.8cm}
\includegraphics[scale=0.33]{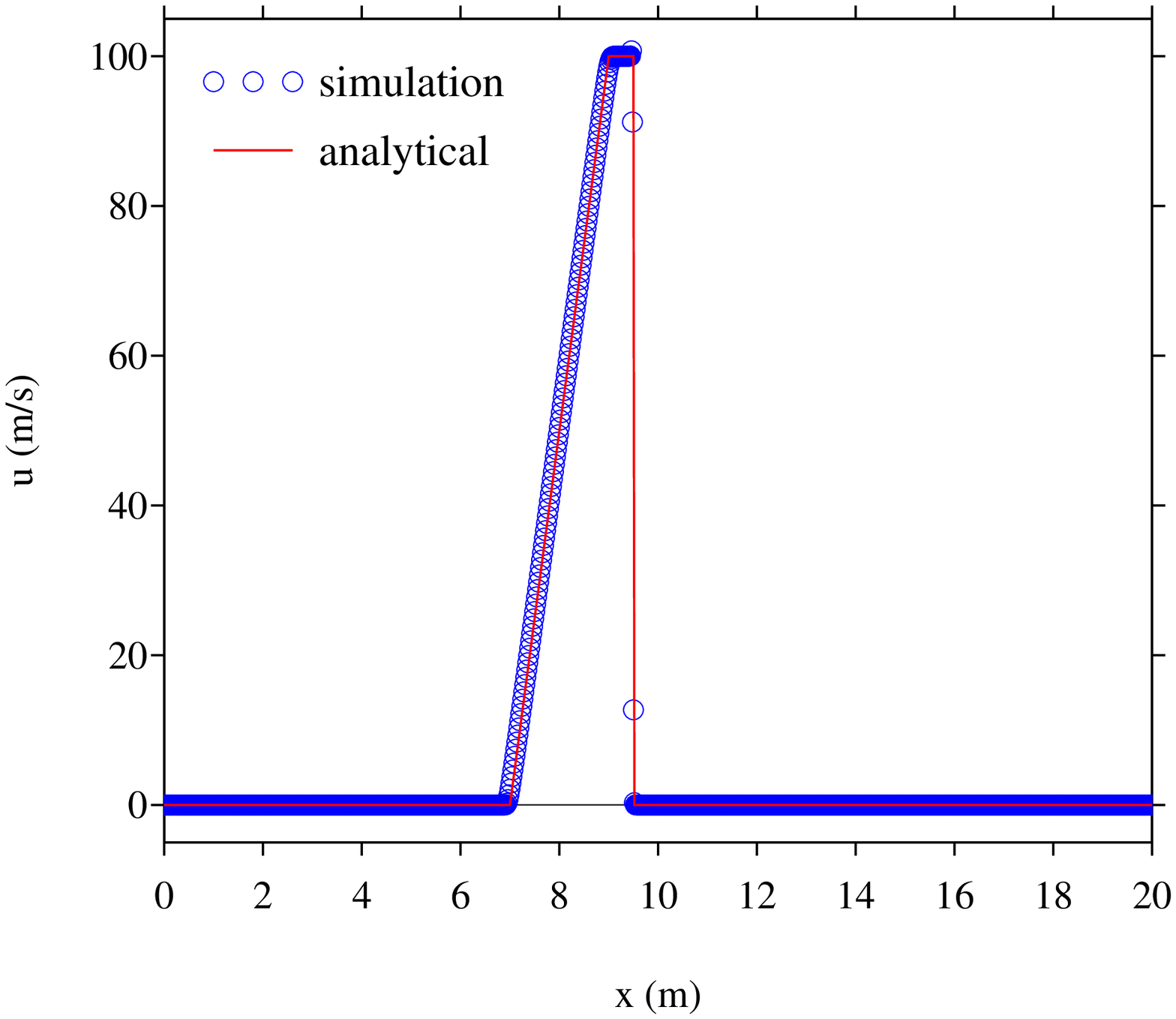}&
\hspace{-0.8cm}
\includegraphics[scale=0.33]{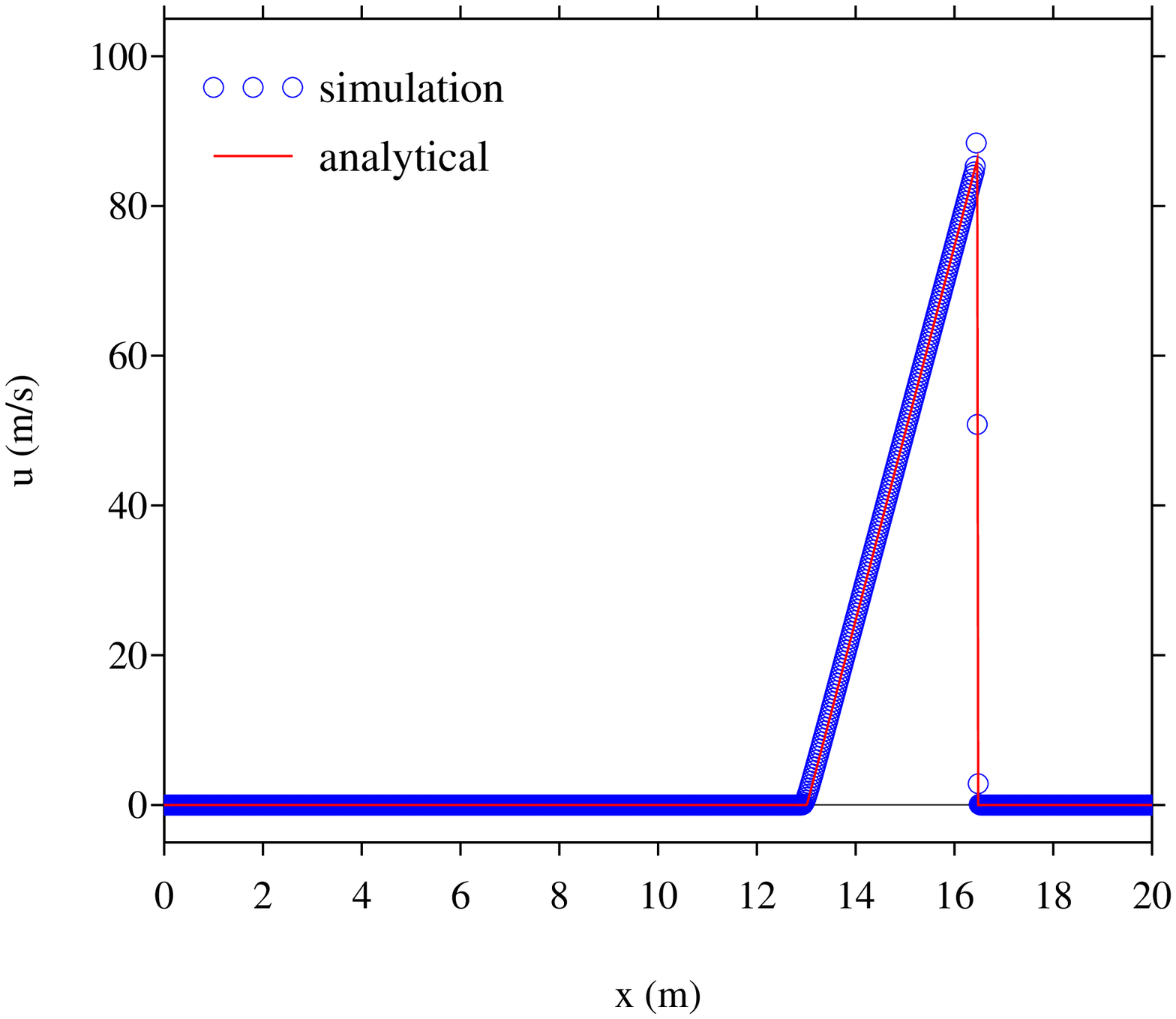} \\
\hspace{-0.8cm}
$\alpha=1/2$, $t_1$ (c) & $\alpha=1/2$, $t_2>t_1$ (d) \\
\hspace{-0.8cm}
\includegraphics[scale=0.33]{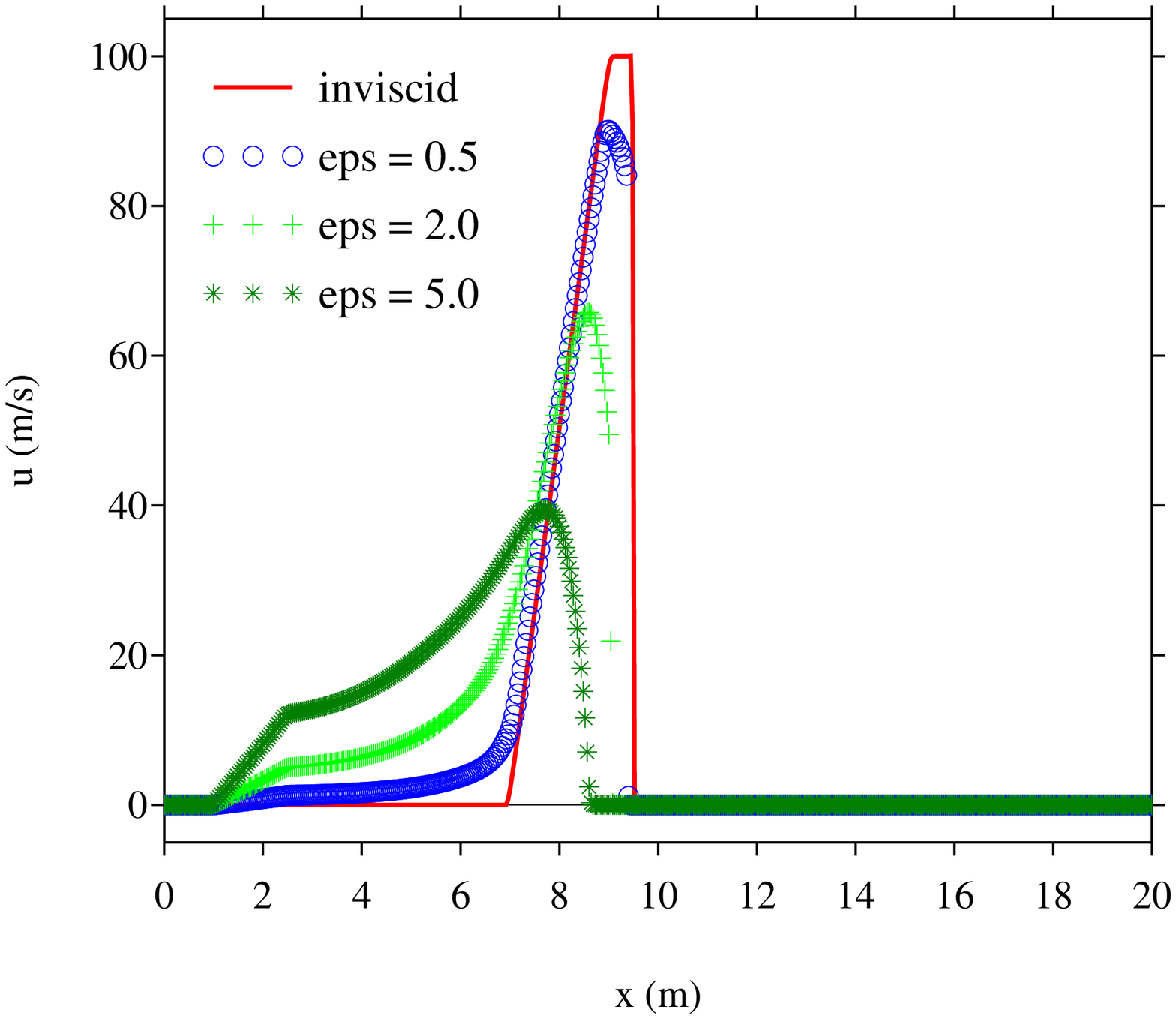}&
\hspace{-0.8cm}
\includegraphics[scale=0.33]{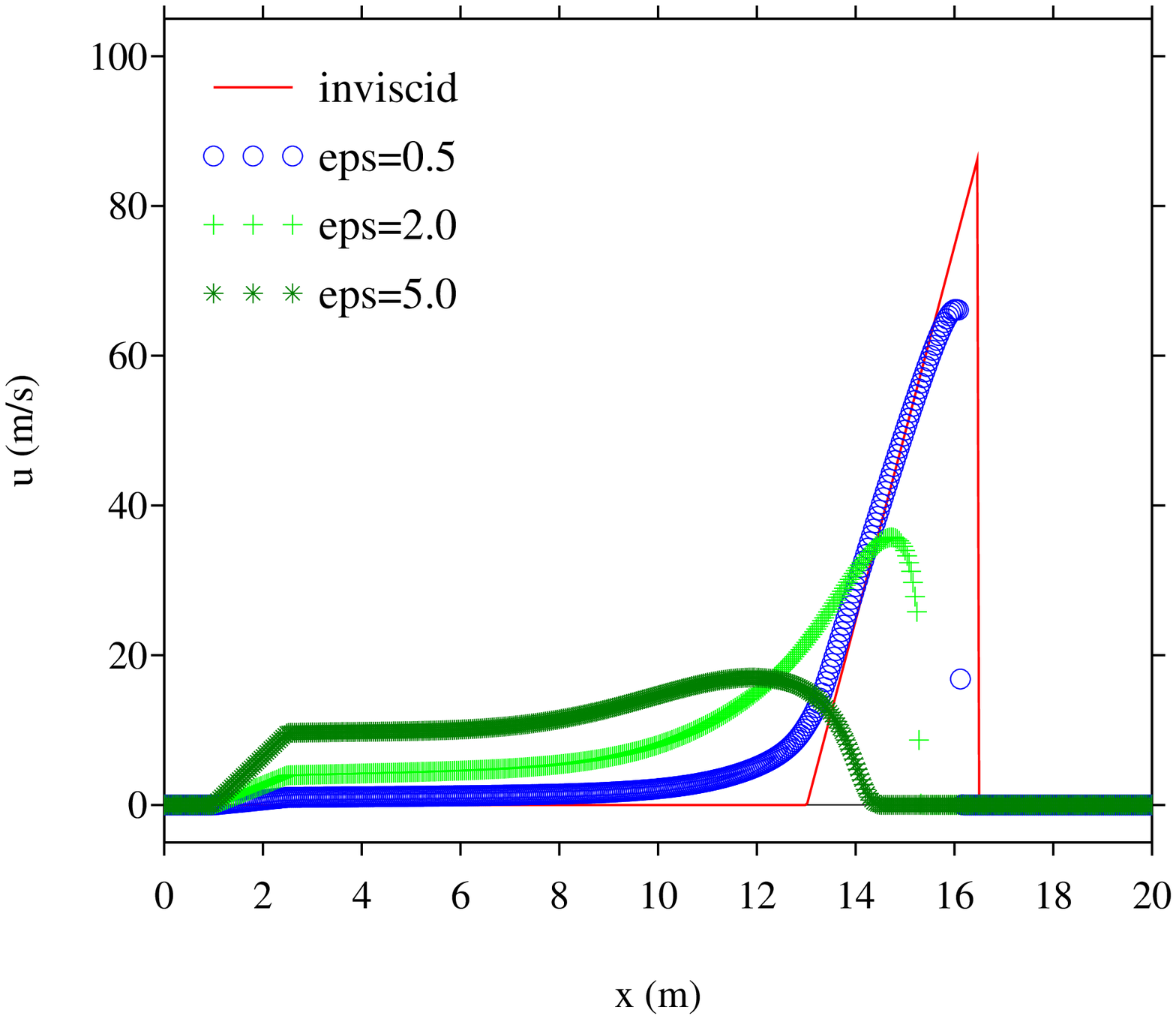} \\
\hspace{-0.8cm}
$\varepsilon=2$, $t_1$ (e) & $\varepsilon=2$, $t_2>t_1$ (f) \\
\hspace{-0.8cm}
\includegraphics[scale=0.33]{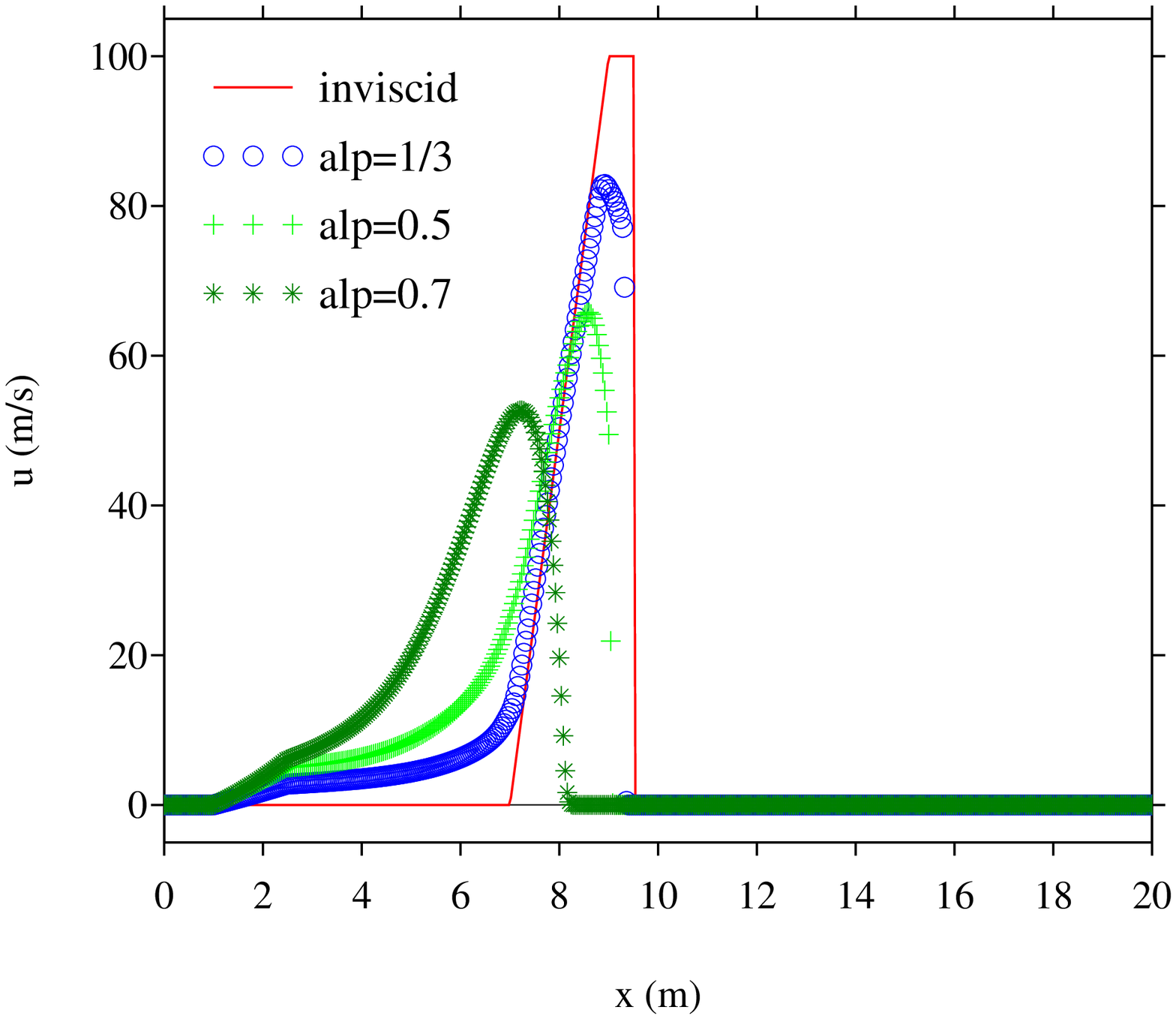}&
\hspace{-0.8cm}
\includegraphics[scale=0.33]{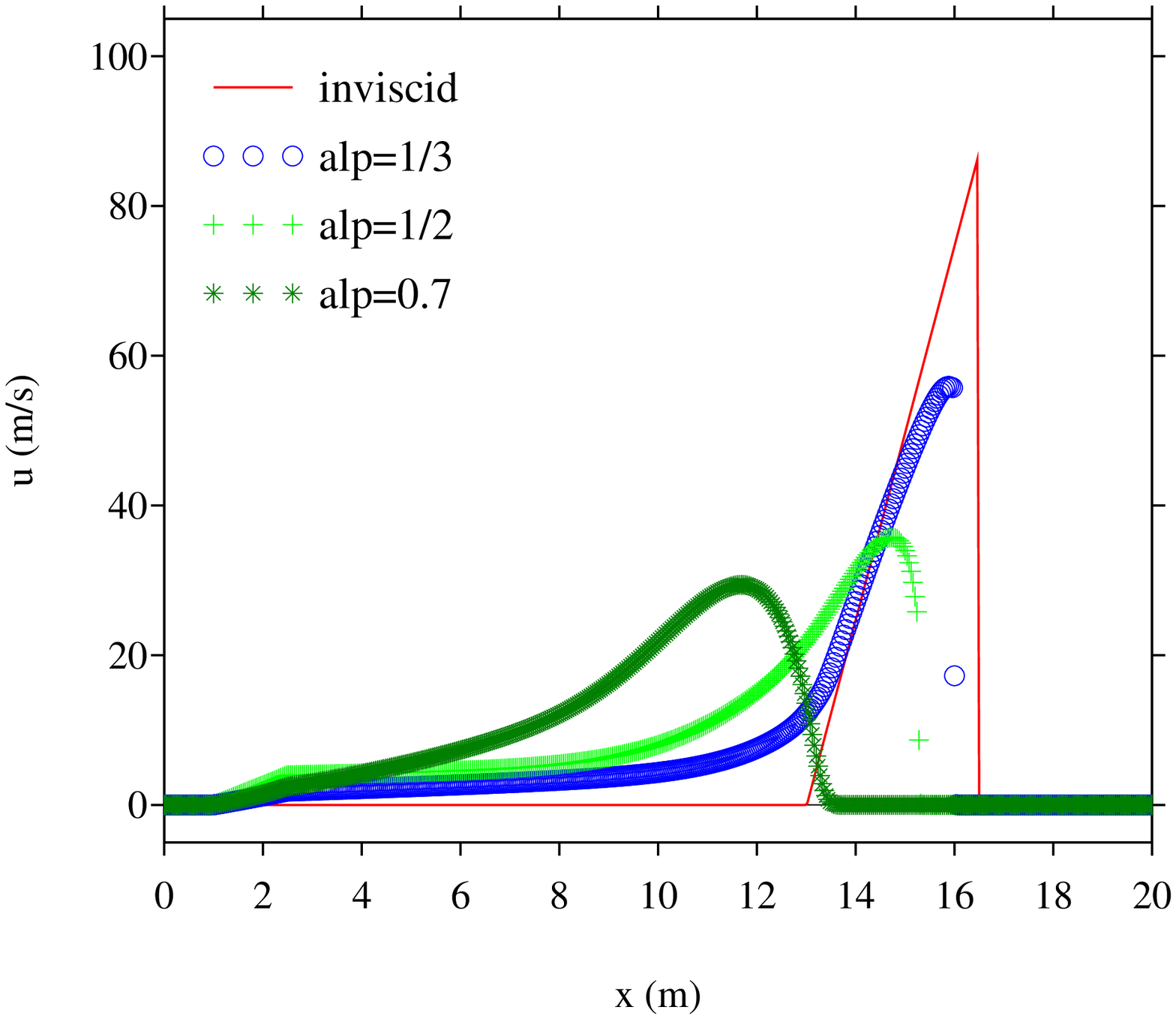} \\
\end{tabular}
\vspace{-0.5cm}
\caption{\label{FigTest2} Test 2: snapshots of the solution at $t_1$ (left row) and $t_2>t_1$ (right row). Nonlinear advection and fractional attenuation are considered. (a-b): numerical and exact solutions without attenuation. (c-d): numerical solutions for $\alpha=1/2$ and various values of the fractional amplitude $\varepsilon$. (e-f): numerical solutions for $\varepsilon=2$ and various values of the fractional order $\alpha$. In (c-f), the inviscid case is computed analytically.} 
\end{center}
\end{figure}

The aim of the second test is to see the effect of fractional attenuation on existing discontinuities. The coefficients of the nonlinear wave propagation are $a=300$ m/s and $b=1$. Various values of the fractional parameters are considered: $\varepsilon=0$ (no attenuation), 0.5, 2 and 5; $\alpha=1/3$, 1/2 and 0.7. Contrary to test 1, no forcing term is considered: $g(t)=0$. The computations are initialized by the rectangular force pulse (\ref{Heaviside}), with $V=100$ m/s. Figure \ref{FigTest2} displays the numerical solution at $t_1=0.02$ s (left row) and at $t_2=0.04$ s (right row). 

In the absence of attenuation ($\varepsilon=0$), the solution is known analytically. Left and right parts of the initial pulse yield a rarefaction wave and a shock wave, respectively (a). At $t^*=0.03$ s, the rarefaction reaches the shock; then, the shock velocity and the amplitude decrease (b). The comparisons between numerical and analytical solutions confirm that the nonlinear wave propagation is correctly simulated.

When $\varepsilon \neq 0$, no analytical solution is known. Figure \ref{FigTest2}-(c),(d) display the numerical solutions when $\alpha=1/2$, for various amplitudes of $\varepsilon$. As predicted by the dispersion analysis (section \ref{SecPbDisp}), the phase velocity and the amplitude of the signal decrease when $\varepsilon$ increases. For small $\varepsilon$, the shock seems to be maintained. For greater values ($\varepsilon=2$ and 5), the shock disappears and is smeared. Similar conclusions are obtained at a given $\varepsilon$ and for increasing values of $\alpha$, as displayed on figure \ref{FigTest2}-(e),(f). 


\subsection{Occurence of shocks}\label{SecResShock}

As a last experiment, we examine the emergence of shocks in the fractional Burger's equation, when a smooth source is injected. The initial conditions are null. A source is excited at the left boundary, with the time evolution (\ref{JKPS_C6}) and the amplitude $V=20$ m/s. 

Figure \ref{FigTest3} displays the snapshots of the numerical solutions at $t=0.06$ s. As in test 2, one observes the effect of increasing values of $\varepsilon$ and $\alpha$: increase of attenuation and decrease of velocity, as predicted by the dispersion analysis. Without attenuation ($\varepsilon=0$), shocks have emerged, leading to classical sawtooth waveforms. For small values of $\varepsilon$ and $\alpha$, the sharp fronts seem to be maintained: only one grid node lies in the sharp profile, probably due to the numerical attenuation. But for higher values of the fractional parameters, the profiles are smeared and the sharp fronts disappear. Contrary to the inviscid case, these simulations indicate that the emergence of discontinuities in the fractional Burger's equation is conditional.

It is emphasized that these numerical experiments are only indications. Indeed, the integration of the hyperbolic step introduces numerical smearing, and one must be cautious when interpreting a waveform as a shock or not. Our goal here is only to motivate further mathematical analysis.

\begin{figure}[htbp]
\begin{center}
\begin{tabular}{cc}
\hspace{-0.8cm}
$\alpha=1/2$ (a) & $\varepsilon=1$ (b) \\
\hspace{-0.8cm}
\includegraphics[scale=0.33]{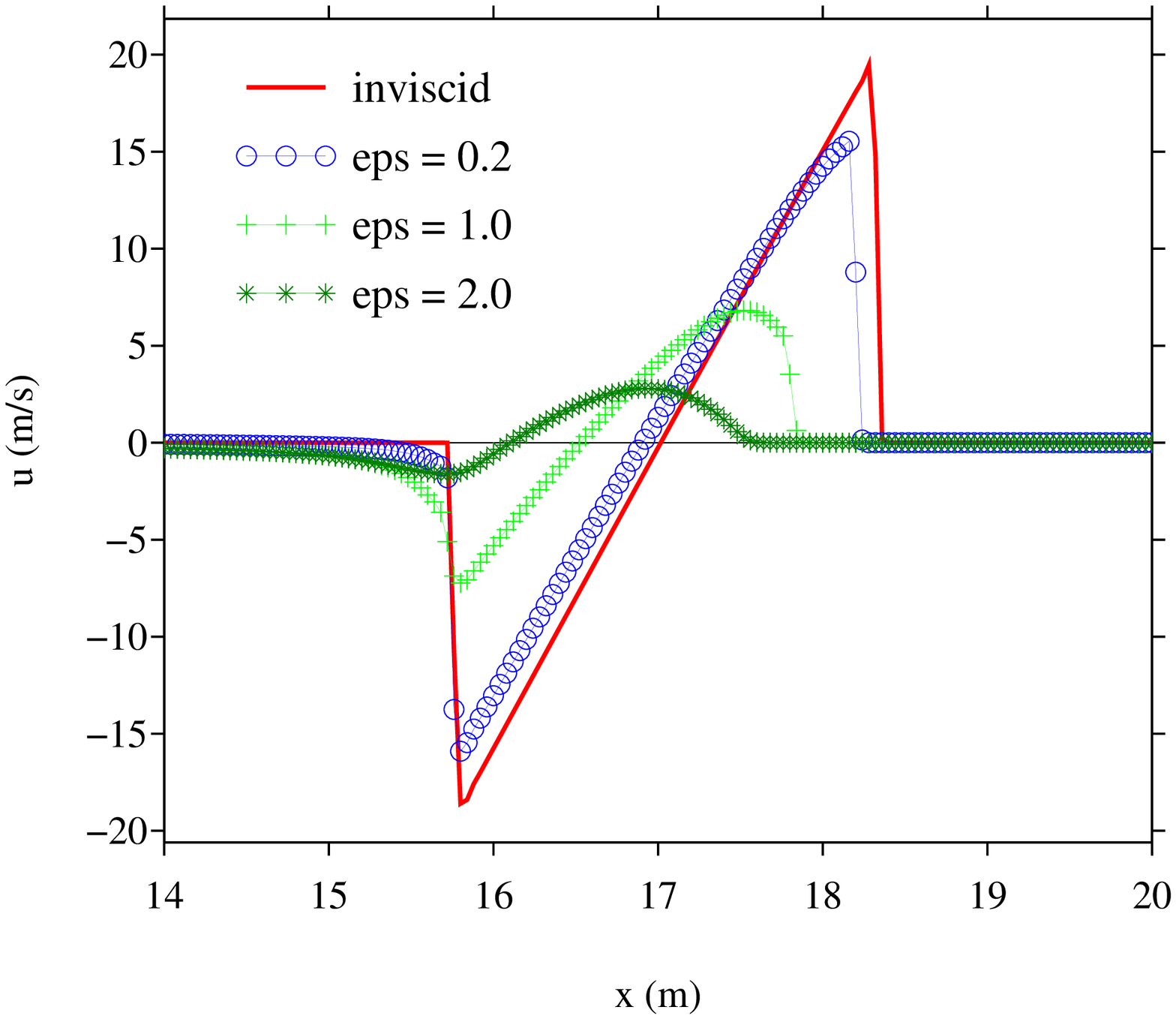}&
\hspace{-0.8cm}
\includegraphics[scale=0.33]{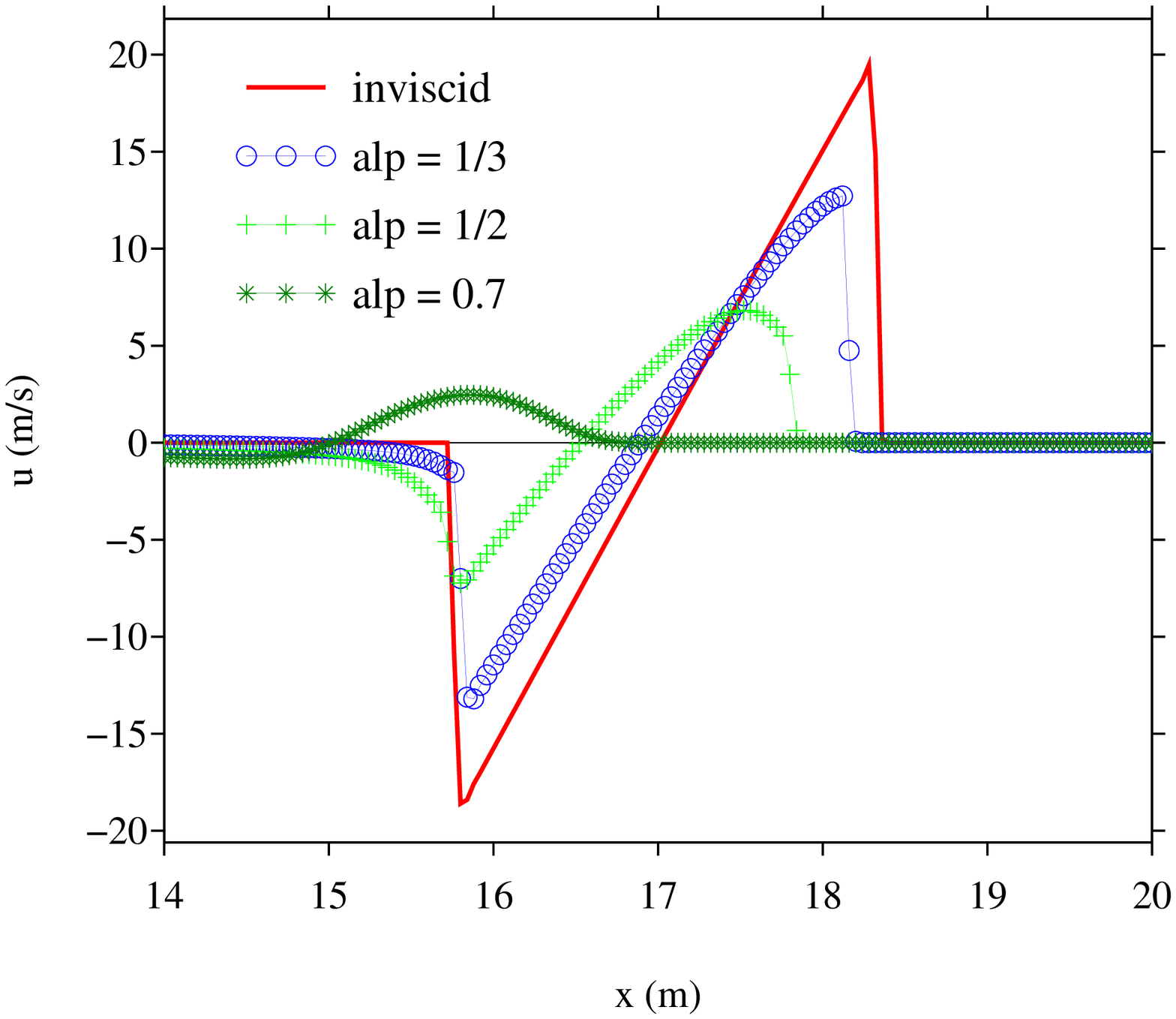} 
\end{tabular}
\vspace{-0.5cm}
\caption{\label{FigTest3} Test 3: snapshots of the numerical solutions at $t=0.06$ s. Nonlinear advection and fractional attenuation are considered. (a) $\alpha=1/2$, and various values of $\varepsilon$. (b) $\varepsilon=1$, and various values of $\alpha$.} 
\end{center}
\end{figure}


\section{Conclusion}\label{SecConclu}

We have proposed a numerical strategy to solve a nonlocal nonlinear hyperbolic equation with fractional attenuation (\ref{TM}). This approach requires to introduce some memory variables to keep track of the past of the solution. In counterpart, the model obtained is well-suited to numerical discretization. The condition of stability is not modified compared with the inviscid hyperbolic equation, and the discrete energy decreases. Moreover, an optimized characterization of the memory variables greatly reduces the number of arrays.

Another strategy is commonly used in nonlinear acoustics, based on a mixed resolution: the propagative part is solved in the space-time domain, whereas the fractional derivative is solved in the space-frequency domain \cite{Dagrau11,Giammarinaro16}. The use of diffusive approximation in the community of nonlinear acoustics thus requires a detailed comparison of efficiency between these two approaches. Concretely, the CPU-time of direct and inverse FFT should be compared to the one induced by the matrix-vector products (\ref{SplitDiffuExp}), at each time and step increment. This analysis is left for future studies.

This article is an attempt for better understanding the competition between nonlinear effects and nonlocal relaxation. Many theoretical questions remain to be addressed. In particular, the numerical experiments have raised the question of regularity of the solutions. Unlike the inviscid Burger's equations, it seems that the emergence of shocks is conditional, as in Burger's equation with linear source term
\begin{equation}
\frac{\textstyle \partial u}{\partial t}+\frac{\textstyle \partial}{\textstyle \partial x}\left(a\,u+b\,\frac{\textstyle u^2}{\textstyle 2}\right)u=-\varepsilon u,\quad \varepsilon>0.
\end{equation}
This question requires a deeper analysis to confirm / infirm the numerical observations. The exact solution of the Riemann problem needs also to be computed.

On the other hand, a similar approach could be adapted to a larger class of hyperbolic equations of the form
\begin{equation}
\frac{\textstyle \partial u}{\partial t}+\frac{\textstyle \partial}{\textstyle \partial x}f(u)+\varepsilon\,D^\alpha_t u=0.
\label{HypFrac}
\end{equation}
The scalar case of a cubic flux function $f(u)=u^3$ is of particular interest \cite{Hayes97}. It models focused acoustic beams of shear waves in soft solids, such as the grey matter in the brain \cite{Giammarinaro16}. 

More generally, the diffusive approach can be applied to a wide range of pseudo-differential time operators with a hereditary behavior. Examples may be found in mechanics for the modeling of viscoelasticity \cite{Deu10} and poroelasticity \cite{TheseBlanc,Blanc13a,Blanc16}. Other models can be investigated in electromagnetism, to describe dispersive media; see \cite{Petropoulos05} and references therein for a review. Lastly, non-hyperbolic equations with time fractional derivatives could also be investigated by applying the diffusive representation: one can think for instance to the nonlinear Erd\'ely-Kober equations describing abnormal diffusion in porous media \cite{Pagnini12,Plociniczak14}.


\appendix

\section{Chester's equation}\label{AppChester}

Equation (\ref{ToyModel}) models various nonlinear and thermoviscous wave phenomena \cite{Makarov97}. It is also related to a well-known model of finite-amplitude sound waves in a tube, as shown here. Let us assume weak nonlinearity, which means that the nonlinear term can be neglected in the expansion of $\frac{ \partial u}{\partial t}$ \cite{Hamilton98}. Based on (\ref{ToyModel}) and (\ref{Caputo}), one has: 
\begin{equation}
\begin{array}{lll}
\displaystyle
\frac{\textstyle \partial u}{\partial t}&=& \displaystyle
-\frac{\textstyle \partial}{\textstyle \partial x}\left(a\,u+b\,\frac{\textstyle u^2}{\textstyle 2}\right)-\varepsilon\,I_t^{1-\alpha}\left(\frac{\textstyle \partial u}{\partial t}\right),\\
[8pt]
&\approx& \displaystyle
-\frac{\textstyle \partial}{\textstyle \partial x}\left(a\,u+b\,\frac{\textstyle u^2}{\textstyle 2}\right)-\varepsilon\,I_t^{1-\alpha}\left(-\frac{\textstyle \partial}{\partial x}(a\,u)-\varepsilon\,D^\alpha_t\right),\\
[8pt]
&\approx& \displaystyle
-\frac{\textstyle \partial}{\textstyle \partial x}\left(a\,u+b\,\frac{\textstyle u^2}{\textstyle 2}\right)+\varepsilon\,a\,I_t^{1-\alpha}\left(\frac{\textstyle \partial u}{\partial x}\right)+\varepsilon^2I_t^{1-2\alpha}u.
\end{array}
\end{equation}
Neglecting the $\varepsilon^2$ term and setting $c=\varepsilon\,a$, one obtains
\begin{equation}
\frac{\textstyle \partial u}{\partial t}+\frac{\textstyle \partial}{\textstyle \partial x}\left(a\,u+b\,\frac{\textstyle u^2}{\textstyle 2}\right)=c\,I_t^{1-\alpha}\left(\frac{\textstyle \partial u}{\partial x}\right).
\label{Chester}
\end{equation}
If $\alpha=1/2$, then (\ref{Chester}) recovers Chester's equation \cite{Chester64} modeling the propagation of simple nonlinear waves in a tube with viscothermal losses. In this latter case, the physical parameters are the ratio of specific heats at constant pressure and volume $\gamma$; the pressure at equilibrium $p_0$; the density at equilibrium $\rho_0$; the Prandtl number Pr; the kinematic viscosity $\nu$. It provides physical sense to the coefficients of (\ref{TM}):
\begin{equation}
a=\sqrt{\frac{\textstyle \gamma\,p_0}{\textstyle \rho_0}},\quad b=\frac{\textstyle \gamma+1}{\textstyle 2},\quad c=\left(1+\frac{\textstyle \gamma-1}{\textstyle \sqrt{\mbox{Pr}}}\right)\frac{\textstyle a \sqrt{\nu}}{\textstyle R},
\label{CoeffChester}
\end{equation}
where $a$ is the sound celerity and $R$ is the radius of the tube. The Mach number $M$ and the characteristic angular frequency $\omega_c=2\,\pi\,f_c$ are defined by
\begin{equation}
M=\frac{\textstyle u}{\textstyle a}, \qquad
\omega_c=\frac{\textstyle 2\,\pi\,a}{\textstyle \lambda_c},
\label{Mach}
\end{equation}
where $\lambda_c$ is the wavelength of the wave.


\section{Computation of the diffusive representation}\label{SecProofDR}

We follow the formalism of \cite{Diethelm08} to prove (\ref{DR}). One recalls the definition of the $\Gamma$ function (with $\beta \in \mathbb{R}^{+*}$) and Euler's reflection formula: 
\begin{equation}
\Gamma(\beta)=\int_0^{+\infty}e^{-z}\,z^{\beta-1}\,dz,\hspace{1cm}
\Gamma(1-z)\,\Gamma(z)=\frac{\pi}{\sin \pi z}, \,\forall z\notin \mathbb{Z}.
\label{EulerGamma}
\end{equation}
Based on (\ref{EulerGamma}), the Caputo fractional derivative (\ref{DR}) writes
\begin{equation}
\begin{array}{lll}
D_t^\alpha h 
&=& \ds \frac{1}{\Gamma(1-\alpha)}\int_{0}^{t}(t-\tau)^{-\alpha}\frac{dh}{d\tau}(\tau)\,d\tau,\\
[10pt]
&=& \ds \frac{\sin \pi\alpha}{\pi}\left(\int_0^{+\infty}e^{-z}\,z^{\alpha-1}\,dz\right)\left(\int_{0}^{t}(t-\tau)^{-\alpha}\frac{dh}{d\tau}(\tau)\,d\tau\right),\\
[10pt]
&=& \ds \frac{\sin \pi\alpha}{\pi}\int_0^t\left(\int_0^{+\infty}e^{-z}\left(\frac{z}{t-\tau}\right)^\alpha\frac{1}{z}\,dz\right)\,\frac{dh}{d\tau}(\tau)\,d\tau.
\end{array}
\end{equation}
Setting the change of variables $z=(t-\tau)\,\theta^2$ in the inner integral w.r.t. $z$, we find
\begin{equation}
\begin{array}{lll}
D_t^\alpha h 
&=& \ds \frac{2\,\sin \pi\alpha}{\pi}\int_0^t\left(\int_0^{+\infty}\theta^{2\alpha-1}\,e^{-(t-\tau)\,\theta^2}\,d\theta  \right)\,\frac{dh}{d\tau}(\tau)\,d\tau,\\
[10pt]
&=& \ds \int_0^{+\infty}{\phi(x,t,\theta)\,d\theta},
\end{array}
\end{equation}
by using Fubini's theorem. One recovers (\ref{DR}) with the diffusive variable (\ref{VarDiff}).


\section{Extended diffusive representation}\label{SecProofRD}

Here we prove proposition \ref{PropRD}.

\begin{proof}
\noindent
From (\ref{FI}), the fractional integral of order $\beta$ is
\begin{equation}
\begin{array}{lll}
\ds I_t^\beta u &=& \ds \frac{t^{\beta-1}}{\Gamma(\beta)}\mathop{*}\limits_{t} u,\\
[8pt]
&=& \ds \int_0^{+\infty}\gamma_\beta\,\theta^{1-2\beta}\left(\int_0^t u(x,\tau)\,e^{-(t-\tau)\,\theta^2}d\tau\right)\,d\theta.
\end{array}
\label{ProofRD1}
\end{equation}
Equation (\ref{ProofRD1}) is compared with (\ref{VarDiff})-(\ref{ODE-DR}): replacing $\frac{\partial u}{\partial t}$ by $u$, and $\alpha$ by $\beta=1-\alpha$, gives the desired result (\ref{DR_I}).

The non-homogeneous ordinary differential equation (\ref{ODE_Xi}) is integrated:
\begin{equation}
\psi(x,t,\theta)=\Psi(x,\theta)\,e^{-\theta^2 t}+\gamma_\beta\,\theta^{1-2\beta}\int_0^t u(x,\tau)\ e^{-(t-\tau)\,\theta^2} d\tau.
\label{ProofRD2}
\end{equation}
Comparison between (\ref{ProofRD1}) and (\ref{ProofRD2}) yields
\begin{equation}
I_t^\beta u=\int_0^{+\infty}\left(\psi(x,t,\theta)-\Psi(x,\theta)\,e^{-\theta^2 t}\right)\,d\theta.
\label{ProofRD3}
\end{equation}
Based on (\ref{ProofRD3}), one gets
\begin{equation}
\begin{array}{lll}
\ds D_t^\alpha u &=& \ds \frac{d}{dt}\left(I_t^{1-\alpha} u\right)-\frac{t^{-\alpha}}{\Gamma(1-\alpha)}\,u_0(x),\\
[10pt]
&=& \ds \frac{d}{dt}\int_0^{+\infty}\left(\psi(x,t,\theta)-\Psi(x,\theta)\,e^{-\theta^2 t}\right)\,d\theta-\frac{t^{-\alpha}}{\Gamma(1-\alpha)}\,u_0(x),\\
[10pt]
&=& \ds \underbrace{\int_0^{+\infty}\frac{\partial \psi}{\partial t}(x,t,\theta)\,d\theta}_{\Delta_1}+\underbrace{\int_0^{+\infty}\Psi(x,\theta)\,\theta^2\,e^{-\theta^2 t}\,d\theta}_{\Delta_2}-\frac{t^{-\alpha}}{\Gamma(1-\alpha)}\,u_0(x).
\end{array}
\label{ProofRD4}
\end{equation}
From (\ref{ODE_Xi}), the first term in the r.h.s. of (\ref{ProofRD4}) writes
\begin{equation}
\Delta_1=\int_{0}^{+\infty} \left(-\theta^2\,\psi+\gamma_\beta\,\theta^{1-2\beta}\,u\right)\,d\theta.
\label{ProofRD5}
\end{equation}
Using the initial condition given in (\ref{DiffusifEtendu}), the second term in the r.h.s. of (\ref{ProofRD4}) writes
\begin{equation}
\begin{array}{lll}
\ds \Delta_2 &=& \ds \int_0^{+\infty}\gamma_\beta\,\frac{u_0(x)}{\theta^{1+2\beta}}\,\theta^2\,e^{-\theta^2 t}\,d\theta,\\
[10pt]
&=& \ds u_0(x)\,\gamma_\beta \int_0^{+\infty}\theta^{2\alpha-1}\,e^{-\theta^2 t}\,d\theta,\\
[10pt]
&=& u_0(x)\,\gamma_\alpha\,t^{-\alpha}\,\Gamma(\alpha),\\
[10pt]
&=& \ds u_0(x)\,\frac{t^{-\alpha}}{\Gamma(1-\alpha)},
\end{array}
\label{ProofRD6}
\end{equation}
where we have used $\gamma_\alpha=\gamma_\beta$ and the classical identity $\Gamma(\alpha)\,\Gamma(1-\alpha)=\frac{\pi}{\sin \pi \alpha}$. Injecting (\ref{ProofRD5}) and (\ref{ProofRD6}) into (\ref{ProofRD4}), one recovers (\ref{DiffusifEtendu}), which concludes the proof.
\end{proof} 


\section{Spectrum of the diffusive matrix ${\bf S}$}\label{SecProofS}

Here we prove proposition \ref{PropVpS}.

\begin{proof}
Let $P_{\bf S}(\lambda)$ denote the characteristic polynomial of the matrix ${\bf S}$, i.e. $P_{\bf S}(\lambda)= \det({\bf S}-\lambda\,{\bf I}_{L+1})$ with ${\bf I}_{L+1}$ the $(L+1)$-identity matrix. The line $i$ and the column $j$ of the determinant are denoted by ${\cal L}_i$ and ${\cal C}_j$, respectively. The following algebraic manipulations are performed successively: 
\begin{itemize}
\item[(i)] ${\cal L}_j\leftarrow {\cal L}_j-\gamma_\alpha\,\theta_j^{2\alpha-1}\,{\cal L}_0\text{ with }j=1,\dots,\,L$
\item[(ii)] ${\cal C}_1\leftarrow {\cal C}_1 \prod\limits_{\ell=1}^L(-\theta_\ell^2-\lambda)$
\item[(iii)] ${\cal C}_1\leftarrow {\cal C}_1-\gamma_\alpha\,\theta_\ell^{2\alpha-1}\,\lambda\,{\cal C}_\ell\prod\limits_{\substack{i=1\\i\neq \ell}}^L(-\theta_i^2-\lambda)\mbox{ for }\ell=2,\dots,\, L+1$.
\end{itemize}
It follows
$$
P_{\bf S}(\lambda)\,\prod\limits_{\ell=1}^L(-\theta_\ell^2-\lambda)
=\lambda\,Q_{\bf S}(\lambda)\,\prod\limits_{\ell=1}^L(-\theta_\ell^2-\lambda)
$$
with
$$
Q_{\bf S}(\lambda)=\prod\limits_{\ell=1}^L(-\theta_\ell^2-\lambda)+\varepsilon\,\gamma_\alpha\sum\limits_{\ell=1}^L\mu_\ell\,\theta_\ell^{2\alpha-1}\prod\limits_{\substack{i=1\\i\neq \ell}}^L(-\theta_i^2-\lambda).
$$
Since $Q_{\bf S}(-\theta_\ell^2)\neq 0$, one gets
$$
P_{\bf S}(\lambda)=\lambda\,Q_{\bf S}(\lambda).
$$
The roots of $P_{\bf S}(\lambda)$ are studied in 4 steps.\\

\noindent
\underline{Step 1}. One has $P_{\bf S}(0)=0$ and $Q_{\bf S}(0)\neq0$, therefore 0 is a simple eigenvalue of ${\bf S}$. \\

\noindent
\underline{Step 2}. In the limit $\lambda \rightarrow 0$, one obtains
$$
Q_{\bf S}(\lambda)\mathop{\sim}\limits_{0}(-1)^{L+1}\left(\prod\limits_{\ell=1}^L\theta_\ell^2+\varepsilon\, \gamma_\alpha\sum\limits_{\ell=1}^L\mu_\ell\,\theta_\ell^{2\alpha-1}\prod\limits_{\substack{i=1\\i\neq \ell}}^L\theta_i^2\right),
$$
hence $\mbox{sgn}(P_{\bf S}(0^-))=(-1)^{L+2}=(-1)^L$.\\

\noindent
\underline{Step 3}. At the quadrature nodes, one has ($j=1,\cdots,\,L$)
$$
\begin{array}{lll}
P_{\bf S}(-\theta_j^2)&=& \displaystyle -\theta_j^2\,Q_{\bf S}(-\theta_j^2),\\
[8pt]
&=& \displaystyle-\theta_j^2\,\varepsilon\,\gamma_\alpha\sum\limits_{\ell=1}^L\mu_\ell\,\theta_\ell^{2\alpha-1}\prod\limits_{\substack{i=1\\i\neq \ell}}^L(-\theta_i^2+\theta_j^2),\\
&=& \displaystyle -\varepsilon\,\gamma_\alpha\,\mu_j\,\theta_j^{2\alpha+1}\prod\limits_{\substack{i=1\\i\neq j}}^L(\theta_j^2-\theta_i^2),
\end{array}
$$
and hence $\mbox{sgn}(P_{\bf S}(-\theta_j^2))=(-1)^{L-j+1}$.\\

\noindent
\underline{Step 4}. In the limit $\lambda \rightarrow-\infty$, one has
$$
Q_{\bf S}(\lambda)\mathop{\sim}\limits_{-\infty}(-1)^{L+1}\,\lambda^L\,\Rightarrow\,P_{\bf S}(\lambda)\mathop{\sim}\limits_{-\infty}(-1)^{L+1}\,\lambda^{L+1}=|\lambda|^{L+1},
$$
and hence $\mbox{sgn}(P_{\bf S}(-\infty))=+1$.\\

The sign of the characteristic polynomial is summed up in the following table:
$$
\begin{array}{c|cccccccc}
\lambda & -\infty & -\theta_L^2 & \cdots & -\theta_{\ell+1}^2 & -\theta_\ell^2 & \cdots & -\theta_1^2 & 0\\
[6pt]
\hline
\\
\mbox{sgn}(P_{\bf S}(\lambda)) & +1 & -1 & \cdots & (-1)^{L-\ell} & (-1)^{L-\ell+1} & \cdots & (-1)^L & (-1)^L
\end{array}
$$
We introduce the intervals
$$
I_\ell=\left\{
\begin{array}{l}
]-\theta_{\ell+1}^2,-\theta_\ell^2], \hspace{0.5cm}\ell=1,\cdots,\,L-1,\\
\\
]-\infty,-\theta_L^2],\hspace{0.65cm} \ell=L.
\end{array}
\right.
$$
Given that $P_{\bf S}$ is continuous, the previous table shows that the polynomial $P_{\bf S}$ changes signe in each of the intervals $I_\ell$. Consequently, one deduces that $P_{\bf S}$ vanishes at least once on each $I_\ell$, i.e. $L$ times. Lastly, $P_{\bf S}$ owns at most $L$ nonzero roots. Consequently $\exists \,!\,\lambda_\ell\in I_\ell\,/\,P_{\bf S}(\lambda_\ell)=0$, with $\ell=1,\dots,\,L$. 
\quad \end{proof}\\


{\bf Acknowledgments}. We are grateful to Dr Y. Diouane, from ISAE, for his useful comments on the manuscript. We thank also the anonymous Reviewers for their constructive remarks.


\end{document}